% VLDB template version of 2020-08-03 enhances the ACM template, version 1.7.0:
% https://www.acm.org/publications/proceedings-template
% The ACM Latex guide provides further information about the ACM template

\documentclass[sigconf, nonacm]{acmart}

\usepackage{ifthen}
\usepackage{alltt}
\usepackage{enumitem}
\usepackage{listings}
\usepackage{mathtools}

%%%%%%%%%%%%%%%%%%%%%%%%
\provideboolean{ISARXIV}  % is this for arXiv? default: true
\setboolean{ISARXIV}{true}
%%%%%%%%%%%%%%%%%%%%%%%%
\newcommand{\ifVldbElse}[2]{\ifthenelse{\not\boolean{ISARXIV}}{#1}{#2}}

\newtheorem{claim}{Claim}
%% The following content must be adapted for the final version
% paper-specific
\newcommand\vldbdoi{XX.XX/XXX.XX}
\newcommand\vldbpages{XXX-XXX}
% issue-specific
\newcommand\vldbvolume{16}
\newcommand\vldbissue{11}
\newcommand\vldbyear{2023}
% should be fine as it is
\newcommand\vldbauthors{\authors}
\newcommand\vldbtitle{\shorttitle} 
% leave empty if no availability url should be set
\newcommand\vldbavailabilityurl{https://github.com/IBM/sliding-window-aggregators}
% whether page numbers should be shown or not, use 'plain' for review versions, 'empty' for camera ready
\newcommand\vldbpagestyle{empty} 

\newcommand*{\onef}[0]{\mathbf{1}}
\newcommand*{\pseudocode}[1]{\texttt{\small #1}}
\newcommand*{\tv}[2]{\big[{#1\atop #2}\big]}

\newtheorem{theorem}{Theorem} % do not use section, it fails in the appendix

\lstdefinelanguage{MyPseudoCode}{
  columns=flexible,
  keepspaces=true,
  basicstyle={\small\ttfamily},
  morekeywords={and,break,each,else,false,for,fun,if,in,not,null,or,return,true,while},
  keywordstyle=\bfseries,
  identifierstyle=\slshape,
  numbers=left,
  numberstyle=\sffamily\tiny,
  numbersep=2mm,
  xleftmargin=4mm,
  mathescape=true,
  escapeinside={(*@}{@*)},
}

\begin{document}

\ifVldbElse{%
\title[Out-of-Order Sliding-Window Aggregation with Efficient Bulk Evictions and Insertions]
{Out-of-Order Sliding-Window Aggregation with\\Efficient Bulk Evictions and Insertions}
}{%
\title[Out-of-Order Sliding-Window Aggregation with Efficient Bulk Evictions and Insertions (Extended Version)]
{Out-of-Order Sliding-Window Aggregation with\\[-1pt]Efficient Bulk Evictions and Insertions (Extended Version)\vspace*{-1pt}}
}

%%
%% The "author" command and its associated commands are used to define the authors and their affiliations.
\author{Kanat Tangwongsan}
\affiliation{%
  %\institution{\mbox{\hspace*{-4mm}Mahidol University International College}}
  \institution{\mbox{Mahidol University International College}}
}
\email{kanat.tan@mahidol.edu}

\author{Martin Hirzel}
\affiliation{%
  \institution{IBM Research}
}
\email{hirzel@us.ibm.com}

\author{Scott Schneider}
\affiliation{%
  \institution{Meta}
}
\email{scottas@meta.com}

\begin{abstract}
Sliding-window aggregation is a foundational stream processing
primitive that efficiently summarizes recent data.
The state-of-the-art algorithms for sliding-window aggregation are
highly efficient when stream data items are evicted or inserted one at
a time, even when some of the insertions occur out-of-order.
However, real-world streams are often not only out-of-order but also
burtsy, causing data items to be evicted or inserted in larger bulks.
This paper introduces a new algorithm for sliding-window aggregation
with bulk eviction and bulk insertion.
For the special case of single insert and evict, our algorithm matches
the theoretical complexity of the best previous out-of-order algorithms.
For the case of bulk evict, our algorithm improves upon the
theoretical complexity of the best previous algorithm for that case
and also outperforms it in practice.
For the case of bulk insert, there are no prior algorithms, and our
algorithm improves upon the naive approach of emulating bulk insert
with a loop over single inserts, both in theory and in practice.
Overall, this paper makes high-performance algorithms for sliding
window aggregation more broadly applicable by efficiently handling the
ubiquitous cases of out-of-order data and bursts.

\end{abstract}

\maketitle

%%% do not modify the following VLDB block %%
%%% VLDB block start %%%
\ifVldbElse{%
\pagestyle{\vldbpagestyle}
\begingroup\small\noindent\raggedright\textbf{PVLDB Reference Format:}\\
\vldbauthors. \vldbtitle. PVLDB, \vldbvolume(\vldbissue): \vldbpages, \vldbyear.\\
\href{https://doi.org/\vldbdoi}{doi:\vldbdoi}
\endgroup
}{%
\pagestyle{plain}
\vspace*{10mm}
}
\begingroup
\renewcommand\thefootnote{}\footnote{\noindent
\ifVldbElse{%
This work is licensed under the Creative Commons BY-NC-ND 4.0 International License. Visit \url{https://creativecommons.org/licenses/by-nc-nd/4.0/} to view a copy of this license. For any use beyond those covered by this license, obtain permission by emailing \href{mailto:info@vldb.org}{info@vldb.org}. Copyright is held by the owner/author(s). Publication rights licensed to the VLDB Endowment. \\
\raggedright Proceedings of the VLDB Endowment, Vol. \vldbvolume, No. \vldbissue\ %
ISSN 2150-8097. \\
\href{https://doi.org/\vldbdoi}{doi:\vldbdoi} \\
}{%
This paper is an extended version of our VLDB 2023 paper ``Out-of-Order Sliding-Window Aggregation with Efficient Bulk Evictions and Insertions''. It adds an appendix with proofs, pseudocode, and examples that did not fit in the page limit.
\vspace*{12mm}
}
}\addtocounter{footnote}{-1}\endgroup
%%% VLDB block end %%%

%%% do not modify the following VLDB block %%
%%% VLDB block start %%%
\ifdefempty{\vldbavailabilityurl}{}{
\vspace{.3cm}
\begingroup\small\noindent\raggedright\textbf{PVLDB Artifact Availability:}\\
The source code, data, and/or other artifacts have been made available at \url{\vldbavailabilityurl}.
\endgroup
}
%%% VLDB block end %%%

\section{Introduction}\label{sec:introduction}

In data stream processing, a sliding window covers the most recent
data, and sliding-window aggregation maintains a summary of it.
Sliding-window aggregation is a foundational primitive for stream
processing, and as such, is both widely used and widely supported.
In various application domains, stream processing must have low
latency; for example, late results can cause financial losses in
trading or harm property and lives in security or transportation.
Furthermore, streaming data often arrives out-of-order, but new data
items must be incorporated into a sliding window at their correct
timestamps and the aggregation may not be commutative.
Finally, data streams do not always have a smooth rate: in the real world,
data items often enter and depart sliding windows in bursts.

When streaming data is bursty, sliding-window aggregation needs to
support efficient bulk evictions and insertions to keep latency low.
In other words, it needs to evict or insert a bulk of $m$ data items
faster than it would take to evict or insert them one by one, lest
it incur a latency spike of $m\times$ that of a single operation.
Bulk evictions are common in time-based windows, where the arrival of
one data item at the youngest end of the window can trigger the
eviction of several data items at the oldest end.
For example, consider a window of size 60 seconds, with data items at
timestamps \mbox{\it [0.1,0.2,0.3,0.4,0.5,10,20,30,40,50,60]} seconds.
If the next data item to be inserted has timestamp \textit{61}, the window
must evict the items at timestamps \mbox{\it [0.1,0.2,0.3,0.4,0.5]}.
Since these are $m=5$ items, evicting them one
by one would incur a $5\times$ latency spike.

While a small bulk (e.g., $m=5$) is harmless, bursts can
result in $m$ in the thousands of data items or more.
For instance, data streams may experience transient outages, causing
bursts during recovery~\cite{bouillet_et_al_2012}.
Besides time-based windows, applications may use other window types
such as sessions~\cite{traub_et_al_2019} or data-driven adaptive
windows~\cite{bifet_gavalda_2007}.
Streaming systems may internally use implementation techniques that
introduce disorder~\cite{li_et_al_2008}.
When a streaming system receives multiple streams from different
data sources, their logical times may drift against each
other~\cite{krishnamurthy_et_al_2010}.
Real-world events, such as breaking news, severe weather, rush hour traffic,
sales, accidents, opening of stores or stock markets,
etc.~can cause bursty streams~\cite{poppe_et_al_2021}.
All these scenarios 
necessitate sliding-window aggregation with efficient bulk evictions and
insertions---without harming the tuple-at-a-time performance.  

The literature has few solutions to this problem, and none match
our solution in completeness or algorithmic complexity.
List-based approaches such as Two-Stacks handle
  neither out-of-order nor bulk
  operations~\cite{tangwongsan_hirzel_schneider_2021}.
The AMTA algorithm only handles in-order windows and only offers bulk
eviction but not bulk insertion~\cite{villalba_berral_carrera_2019}.
CPiX has a linear factor in its algorithmic complexity
for bulk eviction and is limited to commutative aggregation over
time-based windows~\cite{bou_kitagawa_amagasa_2021}.
The FiBA algorithm is optimal for out-of-order sliding-window
aggregation with single evictions and insertions but does not directly
support bulk operations~\cite{tangwongsan_hirzel_schneider_2019}.
While the literature on balanced tree algorithms provides partial
solutions to bulk evictions and
insertions~\cite{brown_tarjan_1979,kaplan_tarjan_1995,hinze_paterson_2006},
each paper solves a different part of the problem using a different
data structure, and none offer incremental aggregation.
Section~\ref{sec:relatedwork} discusses related work in more detail.

Our new solution builds on FiBA~\cite{tangwongsan_hirzel_schneider_2019}, 
a B-tree augmented with fingers and with location-sensitive
partial aggregates.
The fingers help efficiently find tree nodes to be manipulated
when the window slides. The location-sensitive partial aggregates
avoid propagating local updates to the root in most cases.
Intuitively, our bulk eviction and insertion  
have three steps:
\begin{itemize}[topsep=10pt]
  \item a finger-based \emph{search} to find the affected nodes of the
    tree;
  \item a single shared \emph{pass up} the tree to insert or evict
    items in bulk while also repairing any imbalances this causes;
    and
  \item a single shared \emph{pass down} the affected spine(s) of the
    tree to repair location-sensitive partial aggregates stored there.
\end{itemize}
The trick for efficient bulk evict is to not look at each evicted entry
individually, but rather, only cut the tree along the boundary between
the entries that go and those that stay.
The trick for efficient bulk insert is to share work caused by
multiple inserted entries as low down in the tree as possible, i.e., to
process paths from insertion sites together as soon as they converge.

Let $n$ be the window size
(the number of data items currently in the window);
$m$, the bulk size
(the number of data items being evicted or inserted); and
$d$, the out-of-order insertion distance (the number of data items
in the part of the window that overlaps with the bulk).
Our algorithm performs bulk eviction in amortized $O(\log m)$
time and bulk insertion in amortized $O(m\log\frac{d}{m})$ time.
Neither of these two time bounds depend on the window size~$n$
and bulk eviction is sublinear in the bulk size~$m$.
For $m=1$, the amortized time matches the proven lower
bounds of $O(1)$ for eviction and $O(\log d)$ for out-of-order
insertion, which means $O(1)$ for in-order insertion at the smallest~$d$.
The worst-case time complexity is $O(\log n)$ for bulk evict and
$O(m\log(\frac{m+n}{m})+\log d)$ for bulk insert, because the pass up
the tree can reach the root in the worst case.
This worst case is guaranteed to be so rare that in the
long run, the amortized complexity prevails.
It uses $O(n)$ space, with the constant
depending on the B-tree's arity.

We implemented our algorithm in C++ and made it available at
\url{https://github.com/IBM/sliding-window-aggregators}, along with our
implementations of other sliding-window aggregation algorithms we compare with
experimentally. Commit \verb@f3beed2@ was used in the experiments of this
paper. Our experimental results demonstrate that our bulk evict yields the best
latency compared to several state-of-the-art baselines, and our
bulk insert yields the best latency for the out-of-order case (which most
algorithms do not support at all in the first place). Overall, this paper
presents the first algorithm for efficient bulk insertions in sliding windows,
and the algorithm with the best time complexity so far for bulk evictions from
sliding windows.
% It helps keep latency spikes down when aggregating bursty streams.

\section{Related Work}\label{sec:relatedwork}

Before our work, the most efficient algorithm for in-order sliding window
aggregation with bulk eviction was AMTA~\cite{villalba_berral_carrera_2019}.
AMTA supports single inserts or evicts in amortized $O(1)$ time.
Given a window of size $n$, it supports bulk evict in amortized
$O(\log n)$ time.
However, AMTA does not directly support bulk insertion, so inserting
$m$ items
% with a loop of $m$ single inserts
takes amortized $O(m)$ time.
Our algorithm matches AMTA's amortized complexity for single inserts
and evicts, and improves bulk evict to amortized $O(\log m)$ time.
Unlike our algorithm, AMTA does not support out-of-order insert.

CPiX supports both bulk eviction and bulk insertion, including
out-of-order insertion~\cite{bou_kitagawa_amagasa_2021}.
The paper states the time complexity of bulk insert or evict as
\mbox{$(p_1 + 1) \log(|\frac{n}{k}|) + 3p_2$},
where the number $k$ of checkpoints is recommended to be~$\sqrt{n}$;
$p_1$~is the number of affected partitions in the oldest checkpoint; and
$p_2$~is the number of affected partitions in the remaining checkpoints.
Given \mbox{$O(\log(|\frac{n}{\sqrt{n}}|)) = O(\log n)$},
and assuming $p_1$ and $p_2$ are proportional to the batch size $m$,
this corresponds to an amortized time of $O(m \log n)$.
This is worse than AMTA's $O(\log n)$ and our $O(\log m)$ for bulk evict.
Moreover, 
unlike our algorithm, CPiX only works for time-based windows and
commutative aggregation.

The most efficient prior algorithm for out-of-order sliding window
aggregation is FiBA~\cite{tangwongsan_hirzel_schneider_2019}.
It supports a single insert or evict in amortized $O(\log d)$ time,
where $d$ is the distance of the operation from either end of the window.
% In the in-order case, this means amortized $O(1)$ time.
FiBA can emulate bulk insert or evict using loops of $m$ single
inserts or evicts for a time complexity of $O(m \log d)$.
Our new algorithm improves upon this baseline.
% with dedicated support for bulk evict and out-of-order bulk insert.

Some streaming systems limit out-of-order distance to a watermark
\cite{akidau_et_al_2013}; instead, our algorithm implements the more general
case that requires no such a priori bounds.

Our algorithm is inspired by the literature on bulk operations for
balanced trees.
Brown and Tarjan show how to merge two height-balanced trees of sizes $m$
and $n$, where $m<n$, in $O(m\log\frac{n}{m})$ steps~\cite{brown_tarjan_1979}.
The keys of the two trees can be interspersed, so their algorithm
corresponds to our out-of-order bulk insertion scenario.
Unlike our algorithm, theirs supports neither aggregation
nor bulk eviction.
Furthermore, our algorithm improves the complexity to
$O(m\log\frac{d}{m})$, where $d$ is the overlap between the two trees.
Kaplan and Tarjan show how to catenate two height-balanced trees
in worst-case $O(1)$ time~\cite{kaplan_tarjan_1995}.
But they do not allow keys to be interspersed, so their
approach is restricted to the in-order case.
Also, unlike our algorithm, their approach does not perform
aggregation and does not support bulk eviction.
Hinze and Paterson show how to both split and merge balanced trees
in amortized $O(\log d)$ time~\cite{hinze_paterson_2006}.
However, their merge does not allow keys to be interspersed,
so it corresponds to in-order bulk insertion.
Also, their approach does not perform sliding window aggregation.

The sliding-window aggregation literature also pursues other
objectives besides bulk eviction and
out-of-order bulk insertion. %, which are the focus of this paper.
Scotty optimizes for coarse-grained sliding, performing pre-aggregation
to take advantage of co-eviction~\cite{traub_et_al_2019}.
Their work shows how to handle all combinations of order, window
kinds, aggregation operations, etc., and is complementary to this paper.
  ChronicleDB uses a temporal aggregate B+-tree and optimizes writes to
  persistent storage while handling moderate amounts of out-of-order data by
  leaving some free space in each block~\cite{seidemann_et_al_2019}. Hammer
  Slide uses SIMD instructions to speed up sliding-window
  aggregation~\cite{theodorakis_et_al_2018}; SlideSide generalizes it to the
  multi-query case~\cite{theodorakis_pietzuch_pirk_2020}; and LightSaber
  further generalizes it for parallelism~\cite{theodorakis_et_al_2020}.
DABA Lite performs both single in-order insert and single evict in
worst-case $O(1)$ time but does not support out-of-order
insert~\cite{tangwongsan_hirzel_schneider_2021}.
FlatFIT focuses on window sharing for the in-order case, with
amortized $O(1)$ time for single insert and single evict, but does not
support out-of-order insert~\cite{shein_chrysanthis_labrinidis_2017}.
None of the above directly support bulk operations;
they can do $m$ inserts or evicts using simple loops,
with an algorithmic complexity of $m$ times
that of their single-operation complexity.

\section{Background}\label{sec:background}

This section formalizes the problem solved in this paper and reviews
known concepts such as monoids and finger B-trees upon which our work
builds.

\subsection{Problem Statement}\label{sec:problem}

\paragraph{Monoids.}
A monoid is a triple $(S, \otimes, \onef)$ with a set $S$, an
associative binary combine operator $\otimes$, and a neutral
element~$\onef$.
Several common aggregation operators are monoids, including count,
sum, min, and max.
Furthermore, several more common aggregation operators can be lifted
into monoids, including arithmetic or geometric mean, standard
deviation, argMax, maxCount, first, last, etc.
Even several sophisticated statistical and machine learning operators
can be lifted into monoids, including mergeable
sketches~\cite{agarwal_et_al_2012} such as Bloom filters or algebraic
classifiers~\cite{izbicki_2013}.
Associativity means that
\mbox{$\forall v_1,v_2,v_3\in S:(v_1\otimes v_2)\otimes v_3=v_1\otimes(v_2\otimes v_3)$}.
That means we can omit the parentheses and simply write
\mbox{$v_1\otimes v_2\otimes v_3$}; furthermore, in this paper, we sometimes
even omit the $\otimes$ and simply use $v_1v_2v_3$ product notation.
By giving flexibility over how values are grouped during combining,
associativity is essential to most incremental sliding-window
aggregation algorithms.
The identity element $\onef$ satisfies
\mbox{$\forall v\in S:\onef\otimes v=v=v\otimes\onef$}.
It gives meaning to aggregation over empty (sub)windows.
For a monoid, while the combine operator $\otimes$ must be
associative, it does not need to be invertible or commutative.
Thus, any sliding-window aggregation algorithm that works for general
monoids must handle the case of non-invertible
and non-commutative operators.

\paragraph{Abstract Data Type.}
Below we define an abstract data type with three operations
\pseudocode{query}, \pseudocode{bulkEvict}, and
\pseudocode{bulkInsert}.
Our formulation decouples these three operations to make them as
versatile as possible, so they can be used in any order, with any kind
of window specification, including windows that grow and shrink
dynamically.
Of course, an abstract data type is not itself an algorithm; instead,
it merely specifies the behavior that a given algorithm should
implement.
While it is easy to implement the abstract data type with a
brute-force algorithm that recomputes everything from scratch,
our problem statement is to design, analyze, and evaluate an incremental
algorithm\footnote{An \emph{incremental algorithm} keeps partial results to
avoid from-scratch recomputations where possible.} with native support for bulk
operations that have better asymptotic and practical time complexity than
before.

\paragraph{Query.}
The operation \pseudocode{query()} makes no changes to the window
and computes the monoidal combination of all values currently in the
window in the order of their timestamps.
Let the window contents be
\mbox{$W=\tv{t_1}{v_1},\ldots,\tv{t_n}{v_n}$} where $t_i<t_{i+1}$.
Then \pseudocode{query()} returns \mbox{$v_1\otimes\ldots\otimes v_n$},
or the neutral element $\onef$ if the window is empty.

\paragraph{Bulk Eviction.}
The operation \mbox{\pseudocode{bulkEvict(}$t$\pseudocode{)}} removes all
entries with timestamps $\le t$ from the window, leaving the entries
with timestamps~$>t$.
Let the window contain
\mbox{$W^\textrm{pre}=\left\{\tv{t_1^\textrm{pre}}{v_1^\textrm{pre}},\ldots,\tv{t_n^\textrm{pre}}{v_n^\textrm{pre}}\right\}$} before eviction.
Then, the window contents post-eviction are
\[W^\textrm{post}=\left\{\tv{t^\textrm{post}}{v^\textrm{post}}:\tv{t^\textrm{post}}{v^\textrm{post}}\in W^\textrm{pre}\wedge t^\textrm{post}>t\right\}.\]

\paragraph{Bulk Insertion.}
The operation
\mbox{\pseudocode{bulkInsert(}$B^\textrm{in}$\pseudocode{)}},
where the contents of the bulk to be inserted are
\mbox{$B^\textrm{in}=\left\{\tv{t_1^\textrm{in}}{v_1^\textrm{in}},\ldots,\tv{t_m^\textrm{in}}{v_m^\textrm{in}}\right\}$}
with \mbox{$t_i^\textrm{in}<t_{i+1}^\textrm{in}$},
interleaves the previous window contents with the bulk in temporal
order, and in case of collisions, combines them.
Let the window contents pre-insertion be
\mbox{$W^\textrm{pre}=\left\{\tv{t_1^\textrm{pre}}{v_1^\textrm{pre}},\ldots,\tv{t_n^\textrm{pre}}{v_n^\textrm{pre}}\right\}$}.
Then, the window contents post-insertion are
\[\begin{array}{r@{\,}c@{\,}l}
W^\textrm{post}
& = & \left\{\tv{t^\textrm{post}}{v^\textrm{post}}:\tv{t^\textrm{post}}{v^\textrm{post}}\in W^\textrm{pre}\wedge\not\exists v^\textrm{in}:\tv{t^\textrm{post}}{v^\textrm{in}}\in B^\textrm{in}\right\}\\
& \cup & \left\{\tv{t^\textrm{post}}{v^\textrm{post}}:\tv{t^\textrm{post}}{v^\textrm{post}}\in B^\textrm{in}\wedge\not\exists v^\textrm{pre}:\tv{t^\textrm{post}}{v^\textrm{pre}}\in W^\textrm{pre}\right\}\\
& \cup & \left\{\tv{t^\textrm{post}}{v^\textrm{pre}\otimes v^\textrm{in}}:\tv{t^\textrm{post}}{v^\textrm{pre}}\in W^\textrm{pre}\wedge\tv{t^\textrm{post}}{v^\textrm{in}}\in B^\textrm{in}\right\}.
\end{array}\]

\subsection{FiBA Data Structure}\label{sec:fiba}

\begin{figure}
\centerline{\includegraphics[width=\columnwidth]{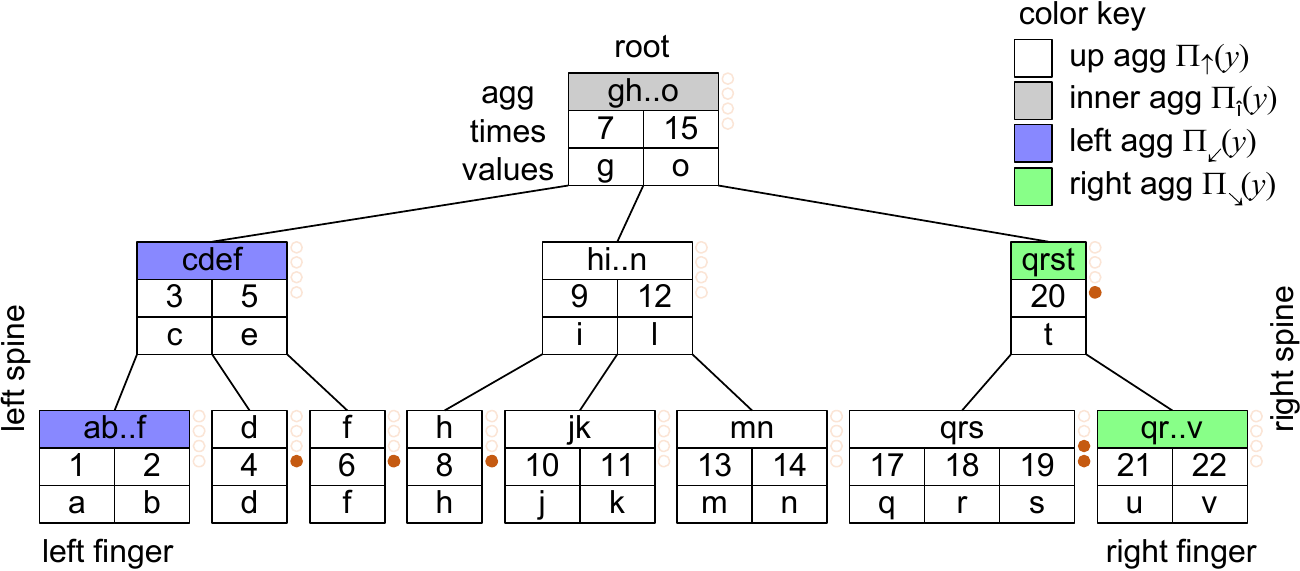}}
\caption{\label{fig:tree_with_aggregates}FiBA data structure example.}
\end{figure}

The FiBA data structure is a finger B-tree augmented for
sliding-window aggregation.
It was first introduced to optimize single out-of-order eviction and
insertion~\cite{tangwongsan_hirzel_schneider_2019}.
This paper uses the same data structure but introduces a new algorithm for bulk
eviction and bulk insertion operations.
Figure~\ref{fig:tree_with_aggregates} gives a running example.
%% for the rest of this section.

\paragraph{Node Contents.}
Each node stores a partial aggregate \pseudocode{agg} and two parallel
arrays of times and values $\tv{t_i}{v_i}$.
For example, the aggregate of the second leaf from the right in
Figure~\ref{fig:tree_with_aggregates} is $\pseudocode{agg}=qrs$.
Recall that $qrs$ is shorthand for \mbox{$q\otimes r\otimes s$}.
While the data structure works with any monoid, assume for
illustration that the monoid here is $\otimes=\max$ and that the
values are $q=4$, $r=2$, and $s=5$.
Then \mbox{$\pseudocode{agg}=q\otimes r\otimes s=\max(4,2,5)=5$}.
In this example, timestamps are integers with almost no gaps, but
there is a gap between times 15 and~17.
In general, %% timestamps need not be integers; indeed,
any totally ordered set will do for timestamps, and the
data structure allows any number of gaps of any sizes.

\paragraph{Location-Sensitive Partial Aggregates.}
At first glance, it would seem intuitive to set \pseudocode{agg} as the
aggregation of a node and all its children and descendants.
However, that would be suboptimal for sliding-window aggregation,
because it would require propagating all window changes to the root,
with time complexity $O(\log n)$.
For a better time complexity, FiBA stores one of four different
kinds of aggregate at each node depending on its location in the tree.
Below are definitions of those four kinds of aggregates:
up aggregate~$\Pi_\uparrow$, inner aggregate~$\Pi_{\hat{\scriptscriptstyle |}}$,
left aggregate~$\Pi_\swarrow$, and right aggregate~$\Pi_\searrow$.
In each of these definitions, let the current node under discussion be
$y$ with arity $a$, children $c_0,\ldots,c_{a-1}$, and values
$v_0,\ldots,v_{a-2}$.

\begin{itemize}[leftmargin=1.5em]%[leftmargin=0pt,itemindent=1em,label=$\rhd$,labelwidth=-1pt]
  \item The \emph{up aggregate} includes all children of $y$ and all
    of $y$'s own values in timestamp order:
\[\Pi_\uparrow(y)=\Pi_\uparrow(c_0)\otimes v_0\otimes\ldots\otimes v_{a-2}\otimes\Pi_\uparrow(c_{a-1})\]
    For example, the node with timestamps 9 and 12 in the middle of
    Figure~\ref{fig:tree_with_aggregates} has $\pseudocode{agg}=hi..n$,
    which is the ordered monoidal combination starting from its
    left-most child, through all values and children, up to and
    including its right-most child.
    For a more concrete example, assume $h=4$,
      $i=5$, $j=1$, $k=3$, $l=5$, $m=4$, $n=2$ and the
      \pseudocode{max} monoid, then \pseudocode{agg=5}.

  \item The \emph{inner aggregate} includes all of $y$'s own values
    and inner children but excludes the left-most and right-most child:
\[\Pi_{\hat{\scriptscriptstyle |}}(y)=v_0\otimes\Pi_\uparrow(c_1)\otimes\ldots\otimes\Pi_\uparrow(c_{a-2})\otimes v_{a-2}\]
    For example, the root in Figure~\ref{fig:tree_with_aggregates} has
    $\pseudocode{agg}=gh..o$, which combines its left value $g$ with the
    aggregate of only the middle child $hi..n$ and the right value~$o$.
    This means that the root stores an aggregate of the entire tree
    except for the left and right spines and their descendants.

  \item The \emph{left aggregate} excludes the leftmost child but
    includes all of $y$'s own values and the rightmost child, and then
    combines that with the parent $x$ (unless $x$ is the root):
\[\qquad\Pi_\swarrow(y)=\Pi_{\hat{\scriptscriptstyle |}}(y)\otimes\Pi_\uparrow(c_{a-1})\otimes\left(\begin{array}{ll}
  \onef & \textrm{if }x\textrm{ is }\textit{root}\\
  \Pi_\swarrow(x) & \textrm{otherwise}
\end{array}\right)\]
    For example, the left-most leaf of
    Figure~\ref{fig:tree_with_aggregates} has aggregate
    $\pseudocode{agg}=ab..f$, which combines its own values $ab$ with the
    aggregate $cdef$ of its parent, resulting in an aggregate of the
    entire left spine and all its descendants.

  \item The \emph{right aggregate} combines the aggregate of the
    parent $x$ (unless $x$ is the root) with all of $y$'s own values
    and most children but excludes the rightmost child:
\[\qquad\Pi_\searrow(y)=\left(\begin{array}{ll}
  \onef & \textrm{if }x\textrm{ is }\textit{root}\\
  \Pi_\searrow(x) & \textrm{otherwise}
\end{array}\right)\otimes\Pi_\uparrow(c_0)\otimes\Pi_{\hat{\scriptscriptstyle |}}(y)\]
    For example, the right-most leaf of
    Figure~\ref{fig:tree_with_aggregates} has aggregate
    $\pseudocode{agg}=qr..v$, which combines the aggregate $qrst$ of its
    parent with its own values $uv$, resulting in an aggregate of the
    entire right spine and all its descendants.
\end{itemize}

\paragraph{Representation.}
The tree is represented by three pointers:
\emph{left finger} to the left-most child;
the \emph{root};
and \emph{right finger} to the right-most child.
Each node stores its location-sensitive partial aggregate
\pseudocode{agg}, times, and values, and in addition, has pointers to its
parent and children, if any.
Finally, each node stores two Boolean flags to indicate whether it is
on the left or right spine, respectively.

\paragraph{Invariants.}
The following properties about height, order, arity, and aggregates
hold before each eviction or insertion and must be established again
by the end of each eviction or insertion.

The \emph{height invariant} requires all leaves to have the exact same
distance from the root.

The \emph{order invariant} says that the times
\mbox{$t_0,\ldots,t_{a-2}$} within each node are ordered, i.e.,
\mbox{$\forall i:t_i<t_{i+1}$};
and furthermore, if a node has children \mbox{$c_0,\ldots,c_{a-1}$},
then for all $i$, $t_i$ is greater than all times in $c_i$ or its
descendants and smaller than all times in $c_{i+1}$ or its
descendants.

The \emph{arity invariants} constrain the sizes of nodes to keep the
tree balanced.
Each node has an arity $a$, and different nodes can have different
arities.
For non-leaf nodes, $a$ is the number of children.
All nodes have $a-1$ entries, i.e., parallel arrays of $a-1$
timestamps and $a-1$ values.
There is a data structure hyperparameter \pseudocode{MIN\_ARITY},
which is an integer $>1$, and
\mbox{$\pseudocode{MAX\_ARITY}=2\cdot\pseudocode{MIN\_ARITY}$}.
They constrain the arity of all non-root nodes to
\mbox{$\pseudocode{MIN\_ARITY}\le a\le\pseudocode{MAX\_ARITY}$}.
And for the root, \mbox{$2\le a\le\pseudocode{MAX\_ARITY}$}.
For example, \mbox{$\pseudocode{MIN\_ARITY}$} is $2$ in
Figure~\ref{fig:tree_with_aggregates}, so all non-leaf nodes have
\mbox{$2\le a\le4$} children, and all nodes have \mbox{$1\le a-1\le3$}
timestamps and values.

The \emph{aggregates invariants} govern which nodes store which kind
of location-sensitive aggregates, color-coded in
Figure~\ref{fig:tree_with_aggregates}.
All non-spine, non-root nodes store the up aggregate.
Nodes that are on the left spine but not the root store the left aggregate.
Nodes that are on the right spine but not the root store the right aggregate.
And the root stores the inner aggregate.
This means that the aggregate of the entire tree is simply the
combination of the aggregates of the left finger, the root, and the
right finger.
In other words, we can implement \pseudocode{query()} in constant
time by returning
\[\Pi_\swarrow(\pseudocode{leftFinger})\otimes\Pi_{\hat{\scriptscriptstyle |}}(\pseudocode{root})\otimes\Pi_\searrow(\pseudocode{rightFinger})\]

\paragraph{Imaginary Coins.}
To help prove the amortized time complexity,
% the original FiBA paper~\cite{tangwongsan_hirzel_schneider_2019}
we pretend that each node stores imaginary coins.
Figure~\ref{fig:tree_with_aggregates} shows these as small copper
circles.
Nodes that are close to underflowing store one coin to pay for the
rebalancing work in case of underflow.
Nodes that are close to overflowing store two coins to pay for the
rebalancing work in case of overflow.
Then, the proofs for amortized time complexity show that for any
possible sequence of operations, the algorithm always stores up
enough coins in advance at each node before it has to perform
eventual actual rebalancing work.

\section{Bulk Eviction}\label{sec:bulkeviction}

As defined in Section~\ref{sec:problem},
\pseudocode{bulkEvict(}$t$\pseudocode{)} removes all entries with
timestamps $\le t$ from the window.
So our algorithm must discard nodes to the
left of~$t$, keep nodes to the right of~$t$, and for nodes that
straddle the boundary, locally evict all entries up to $t$ and repair
any violated invariants.
Our bulk eviction algorithm has three steps:
\begin{enumerate}[label={\textbf{Step~\arabic*}}, leftmargin=*]
  \item A finger-based \emph{eviction boundary search} that returns a
    list, called \pseudocode{boundary}, of triples
    \mbox{\pseudocode{(node, ancestor, neighbor)}}.
  \item A~\emph{pass up} the \pseudocode{boundary}, and beyond as
    needed to repair invariants, that does the actual evictions and
    most repairs.
  \item A~\emph{pass down} the left spine, and if needed also the
    right spine, that repairs any leftover invariant violations.
\end{enumerate}

\medskip
\noindent\textbf{Bulk eviction Step 1: Eviction boundary search.}
This step finds the boundary to enable any
subsequent rebalancing operations during Step~2 to be
constant-time at each level.
For rebalancing to be efficient, it cannot afford
to trigger any searches of its own, and must instead rely on all
required searching to have already been done upfront.
Whereas textbook algorithms for B-trees with single evictions (such
as~\cite{cormen_leiserson_rivest_1990}) can repair arity invariants by
rebalancing with a node's left or right sibling, bulk eviction leaves
no left sibling.
That means the only eligible neighbor to help in rebalancing is the
right one, and that may have a different parent and thus not be a sibling.
Furthermore, rebalancing requires the least common ancestor of the
node and its neighbor, and that might not be their parent.
Hence, the job of the finger-based search is to find a list of
\mbox{\pseudocode{(node, ancestor, neighbor)}} triples, one for each
relevant level of the tree.
The search first starts at the left or right finger, whichever is
closest to $t$, and walks up the corresponding spine to find the top
of the boundary, i.e., the lowest spine node whose descendants straddle~$t$.
Then, the search traverses down to the actual eviction point while
populating the \pseudocode{boundary} data structure.
This downward traversal always keeps at most two separate chains for
the \pseudocode{node} and its \pseudocode{neighbor}, and can thus
happen in a single loop over descending tree levels.
If the search finds an exact match for $t$ in the tree, it stops
early; otherwise, it continues to a leaf and stops there.

\medskip
\noindent\textbf{Bulk eviction Step 2: Pass up.}
This step of the algorithm does most of the work: it performs
the actual evictions, and along the way, it also repairs most of the
invariants that those evictions may have violated.
Recall from Section~\ref{sec:fiba} that there are
invariants about height, order, arity, and location-sensitive partial
aggregates.
The pass up never violates invariants about height or order.
It immediately restores arity invariants, using some novel rebalancing
techniques described below.
Regarding aggregate invariants, the pass up only repairs aggregates
that follow a strictly ascending direction (up aggregates
$\Pi_\uparrow$ and inner aggregates~$\Pi_{\hat{\scriptscriptstyle |}}$).
The pass up leaves aggregates that involve the parent (left aggregates
$\Pi_\swarrow$ and right aggregates~$\Pi_\searrow$) to the later pass
down to repair.

The pass up has two phases: an eviction loop up the
\pseudocode{boundary} returned by the search, followed by a repair
loop further up beyond the boundary as long as there is more to repair.
At each level, the eviction loop performs the local eviction, repairs
arity underflow, and repairs local up aggregates or inner aggregates.
At each level that still needs such repair, the repair loop repairs
arity underflow and repairs local up aggregates or inner aggregates.
Aggregate repair happens in constant time per level by simply
recomputing aggregates of surviving affected nodes after eviction and
rebalancing are done.
Arity repair, also known as rebalancing, either moves entries from the
neighbor to the node or merges the node into the neighbor, depending
on their respective arities.
Let \pseudocode{nodeDeficit} be
\mbox{$\pseudocode{MIN\_ARITY}-\pseudocode{node.arity}$} and
let \pseudocode{neighborSurplus} be
\mbox{$\pseudocode{neighbor.arity}-\pseudocode{MIN\_ARITY}$}.
If \mbox{$\pseudocode{nodeDeficit}\le\pseudocode{neighborSurplus}$},
rebalancing does a move; otherwise, it does a merge.

\begin{figure}
\centerline{\includegraphics[scale=0.65]{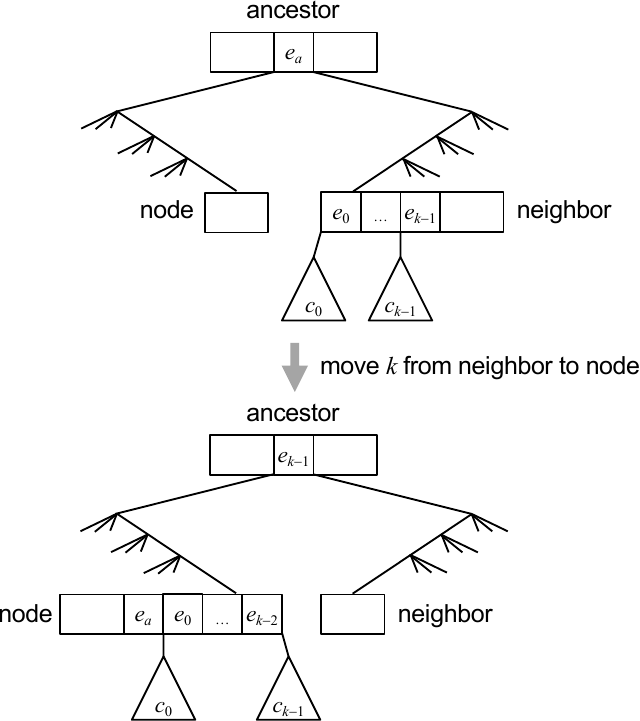}}
\caption{\label{fig:move_batch_general}Move batch.}
\end{figure}

Figure~\ref{fig:move_batch_general} illustrates the \emph{move}
operation, representing each pair $\tv{t_x}{v_x}$ of timestamp and
value as an entry~$e_x$.
In this figure, $k$ corresponds to \pseudocode{nodeDeficit}, i.e., the
number of entries and children to move to \pseudocode{node} to repair
its underflow by bringing its arity back to \pseudocode{MIN\_ARITY}.
In contrast to the textbook move
operation~\cite{cormen_leiserson_rivest_1990}, $k$ may exceed~1 and
\pseudocode{neighbor} may not be a sibling of \pseudocode{node}.
The only entry of the ordered window that is between \pseudocode{node}
and \pseudocode{neighbor} is $e_a$ in their least common ancestor.
So the move rotates $e_a$ into \pseudocode{node}, along with
\mbox{$e_0,\ldots,e_{k-2}$} and all associated children, and rotates
$e_{k-1}$ to the ancestor.
In the end, \pseudocode{node} has arity \pseudocode{MIN\_ARITY} and
\pseudocode{neighbor} has arity $\ge\pseudocode{MIN\_ARITY}$, because
it started with sufficient surplus.
Of course, \pseudocode{neighbor} still has arity
$\le\pseudocode{MAX\_ARITY}$, because it started out that way and did
not grow any bigger.
Figure~\ref{fig:move_batch_example}~\ifVldbElse{in the extended
version~\cite{tangwongsan_hirzel_schneider_2023}}{}shows pseudocode and a
concrete example for \emph{move}.

\begin{figure}
\centerline{\includegraphics[scale=0.65]{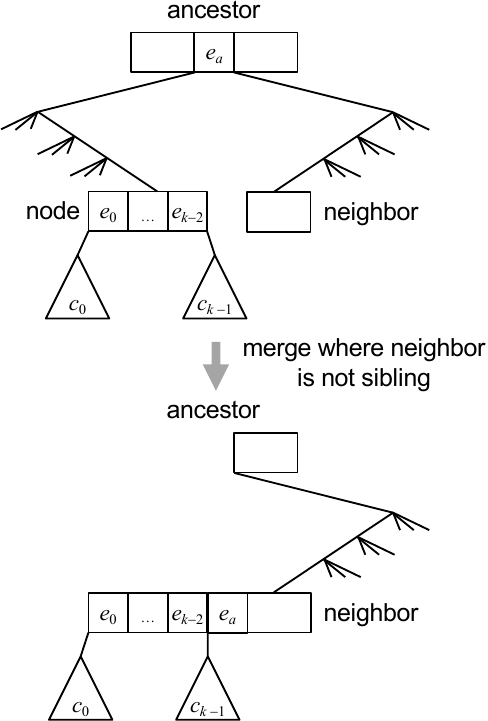}}
\caption{\label{fig:merge_notsibling_general}Merge with neighbor (non-sibling).}
\end{figure}

Figure~\ref{fig:merge_notsibling_general} illustrates \emph{merge},
which adds what is left of \pseudocode{node} to
\pseudocode{neighbor} and then eliminates \pseudocode{node}.
Unlike in the textbook B-tree setup, \pseudocode{node} and
\pseudocode{neighbor} may not be direct siblings.
Since any other vertices on the path from \pseudocode{node} to
\pseudocode{ancestor} are entirely $<t$, those vertices will also
be eliminated.
On the other hand, $e_a$ has a timestamp $>t$, so it remains in the
tree, and we rotate it into \pseudocode{neighbor}.
Let \pseudocode{oldNodeArity} and \pseudocode{oldNeighborArity} refer
to the arity of the node and its neighbor before the merge.
Then, after the merge, we have
\[\begin{array}{l@{\,}l@{\,}l}
&\hspace{-5mm} \pseudocode{neighbor.arity}\\
=~& \pseudocode{oldNodeArity}
  & +\,\pseudocode{oldNeighborArity}\\
=~ & \pseudocode{MIN\_ARITY} - \pseudocode{nodeDeficit}
  & +\,\pseudocode{MIN\_ARITY} + \pseudocode{neighborSurplus}\\
=~& \multicolumn{2}{@{\,}l}{2\cdot\pseudocode{MIN\_ARITY}
    + (\pseudocode{neighborSurplus} - \pseudocode{nodeDeficit})}\\
\end{array}\]
This means that there is no overflow, because merge only happens when
\mbox{$\pseudocode{nodeDeficit}>\pseudocode{neighborSurplus}$}, and
there is no underflow, because
\mbox{$\pseudocode{nodeDeficit}\le\pseudocode{MIN\_ARITY}$} and
\mbox{$\pseudocode{neighborSurplus}\ge0$}.
See code and example in
Figure~\ref{fig:merge_notsibling_example}\ifVldbElse{ in the extended
version~\cite{tangwongsan_hirzel_schneider_2023}}{}.

\begin{figure}
\begin{minipage}{0.48\columnwidth}
\centerline{\includegraphics[scale=0.65]{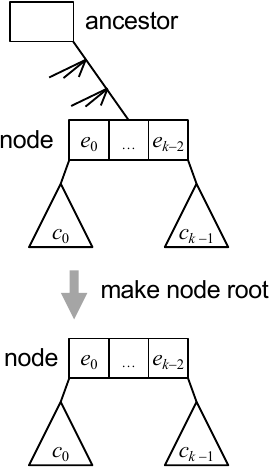}}
\caption{\label{fig:make_node_root_general}Make node root.}
\end{minipage}
\begin{minipage}{0.48\columnwidth}
\centerline{\includegraphics[scale=0.65]{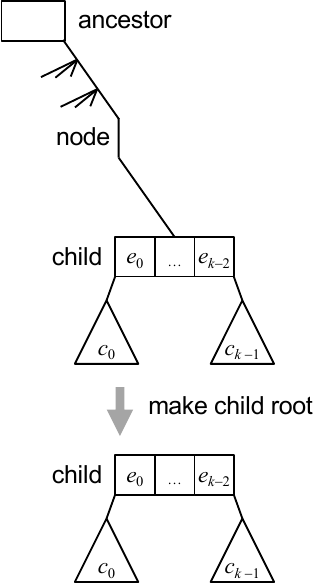}}
\caption{\label{fig:make_child_root_general}Make child root.}
\end{minipage}
\end{figure}

For \pseudocode{bulkEvict(}$t$\pseudocode{)} to be fully general, it
must handle the case where $t$ is all the way on the right spine.
This implies that the root itself is to the left of $t$ and must be
eliminated.
Eliminating the root shrinks the tree from the top, thus preserving
the height invariant, and requires giving the tree a new root lower down.
There are two sub-cases for shrinking the tree given a node on the
right spine.
If, after the local eviction, the node still has arity $>1$, the
algorithm makes it the root (Figure~\ref{fig:make_node_root_general});
otherwise, the node has arity $=1$ and the algorithm makes its single
child the root (Figure~\ref{fig:make_child_root_general}).
Figure~\ref{fig:make_child_root_example}~\ifVldbElse{in the extended
version~\cite{tangwongsan_hirzel_schneider_2023}}{}shows pseudocode and an
example.

\medskip
\noindent\textbf{Bulk eviction Step 3: Pass down.}
The last step of the algorithm repairs left aggregates and
spine flags on the left spine.
In case the eviction touched the right spine, it also repairs right
aggregates and spine flags on the right spine.
Recall that the left aggregate and right aggregate of a node are
computed using the aggregate result from its parent.
Hence, the pass down loops over tree levels and performs a local
recompute to propagate these changes.

\begin{theorem}\label{theorem_insert_time}
  The algorithm for \pseudocode{bulkEvict(}$t$\pseudocode{)} takes 
  $O(\log m)$ amortized time and $O(\log n)$ worst-case time.
\end{theorem}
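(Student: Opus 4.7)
\medskip
\noindent\textbf{Proof proposal.} The plan is to bound the three steps of the algorithm separately and tie them together via a single structural lemma relating tree height to $m$. I would first establish that the \emph{top of the boundary}, i.e., the lowest left-spine ancestor whose descendants straddle $t$, sits at height $h^\star = O(\log m)$ above the leaves. The reason is that this node has at least one full child subtree whose entries are all $\le t$ and hence fully evicted; by the arity invariants such a subtree of height $h^\star - 1$ contains $\Omega(2^{h^\star})$ entries, which must be at most $m$.

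For the worst-case $O(\log n)$ bound, each of the three steps does $O(1)$ work per tree level visited, and the tree has height $O(\log n)$ by the height and arity invariants. Step~1 walks up one spine and then down two parallel chains for \pseudocode{node} and \pseudocode{neighbor}. Step~2's eviction loop spans the boundary and its repair loop extends at most to the root. Step~3 repairs left (and possibly right) spine aggregates and flags. Each of these visits $O(\log n)$ levels, so the total is $O(\log n)$.

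For the amortized $O(\log m)$ bound, Steps~1 and~3 are bounded directly by $h^\star$. Step~1 reaches the boundary top and then descends at most $h^\star$ more levels. Step~3 only needs to recompute aggregates at and below the boundary top: for any left-spine node strictly above the boundary top $v$, the modifications lie entirely in its leftmost subtree, and both $\Pi_{\hat{\scriptscriptstyle |}}$ and the rightmost-child $\Pi_\uparrow$ exclude that subtree, so the node's stored left aggregate is unaffected. Hence $O(\log m)$ levels suffice for Step~3, and Step~2's eviction loop is similarly bounded by $h^\star$. The delicate piece is Step~2's \emph{repair loop}, which can in principle ascend from height $h^\star$ all the way to the root for $O(\log n)$ actual work. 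Here I would invoke the coin potential of Section~\ref{sec:fiba}: every extra level the loop crosses either merges a node (consuming a coin parked on an underflow-prone node) or moves and terminates the loop. Combined with a bookkeeping argument showing that only $O(\log m)$ new coins need to be deposited to re-establish the invariant, the amortized cost collapses to $O(\log m)$.

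The main obstacle will be the coin accounting for the repair loop. The existing FiBA argument deposits and consumes coins near a single modified leaf; here the loop starts at the boundary top $v$, and the move and merge of Figures~\ref{fig:move_batch_general} and~\ref{fig:merge_notsibling_general} act between \pseudocode{node} and a non-sibling \pseudocode{neighbor} sharing only a higher ancestor. I must verify that after these generalized rotations the correct coin count remains on every node near underflow or overflow, and that each level crossed beyond $h^\star$ is indeed paid for by exactly one freed coin. A small additional subtlety arises when $t$ forces eviction up the right spine, eliminating the old root: in that case $m = \Omega(n)$, so $\log m = \Theta(\log n)$ and the two bounds coincide, and the root-shrinking cases of Figures~\ref{fig:make_node_root_general} and~\ref{fig:make_child_root_general} can be charged against the coins held by the former root.
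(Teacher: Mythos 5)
Your proposal follows essentially the same decomposition as the paper's proof: bound Step~1, Step~3, and the eviction loop of Step~2 by the height $h^\star=O(\log m)$ of the boundary top (your subtree-counting argument for $h^\star=O(\log m)$ is correct and is more explicit than the paper, which simply asserts the search cost), handle the worst case by the $O(\log n)$ tree height, and charge the repair loop to the coin potential. The one place where you make the problem harder than it is: the ``main obstacle'' you identify---re-verifying coin accounting for the generalized non-sibling batch move and merge of Figures~\ref{fig:move_batch_general} and~\ref{fig:merge_notsibling_general}---does not actually arise, because those generalized operations occur only inside the eviction loop along the boundary, which is already paid for directly in $O(\log m)$ worst-case time and needs no amortization. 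The paper's key observation is that once the repair loop climbs \emph{above} the boundary top, the parent there is missing at most one child, i.e., the situation is exactly a single-entry underflow with an ordinary sibling, so Lemma~9 of the original FiBA paper applies verbatim and gives amortized $O(1)$ for the entire repair loop; no new coin analysis is required. Finally, your claim that Step~3 touches only levels at or below the boundary top is slightly too strong---if the repair loop restructures nodes above $h^\star$, their (and hence their descendants') left aggregates must also be recomputed---but this is harmless since the pass down traverses the same levels as the pass up and inherits its amortized bound.
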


\begin{proof}
  Consider the steps of the algorithm separately.
  Step~1, the finger-based search, takes time $O(\log m)$ worst-case,
  since it takes a single traversal up from a finger to the lowest ancestor
  containing $t$ followed by a single traversal down at most to a leaf.
  Step~2, the pass up, comprises an eviction loop followed by a repair loop.
  The eviction loop takes time $O(\log m)$ worst-case, since it
  traverses the \pseudocode{boundary} list returned by the search.
  The repair loop might continue to repair overflow past the top of
  the boundary.
  In the worst case, it might reach the root, bringing
  the total time complexity of the evict loop plus repair loop to
  $O(\log n)$ worst-case.
  However, since the repair loop starts above the boundary, at its
  start, it can at most have to deal with an underflow of a single entry.
  Therefore, it meets the conditions of Lemma~9 from the FiBA
  paper~\cite{tangwongsan_hirzel_schneider_2019}, which uses virtual
  coins to show that the amortized cost for the repair loop is~$O(1)$.
  This brings the total amortized time of the pass up to
  \mbox{$O(\log m+1)=O(\log m)$}.
  Finally, Step~3, the pass down, traverses the same number of levels
  as the pass up.
\end{proof}

%% \begin{figure*}
%% \begin{minipage}{0.68\textwidth}
%% \centerline{\includegraphics[scale=1.2]{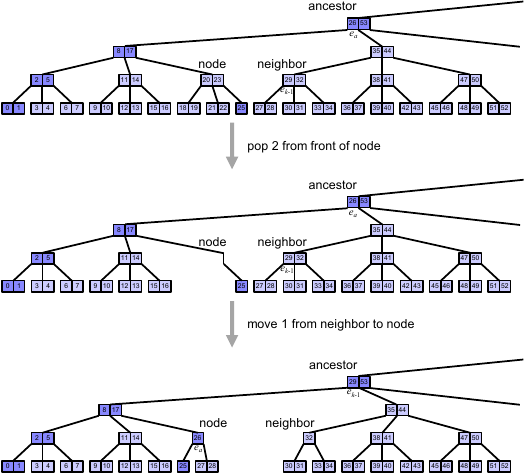}}
%% \caption{Move batch (example).}
%% \end{minipage}
%% \begin{minipage}{0.29\textwidth}
%% \centerline{\includegraphics[scale=0.6]{move_batch_general.pdf}}
%% \caption{Move batch (general).}
%% \end{minipage}
%% \end{figure*}

%% \begin{figure*}
%% \begin{minipage}{0.68\textwidth}
%% \centerline{\includegraphics[scale=1.2]{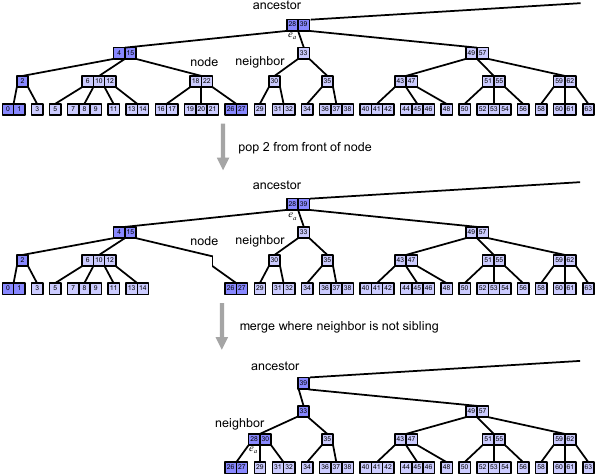}}
%% \caption{Merge where neighbor is not sibling (example).}
%% \end{minipage}
%% \begin{minipage}{0.29\textwidth}
%% \centerline{\includegraphics[scale=0.6]{merge_notsibling_general.pdf}}
%% \caption{Merge where neighbor is not sibling (general).}
%% \end{minipage}
%% \end{figure*}

\section{Bulk Insertion}\label{sec:bulkinsertion}

As defined in Section~\ref{sec:problem},
\pseudocode{bulkInsert(}$B^\textrm{in}$\pseudocode{)} inserts one or more entries into the
window. The bulk of entries is modeled as an iterator of
\pseudocode{(timestamp, value)} pairs, which are assumed to be
timestamp-ordered. Our \pseudocode{bulkInsert} algorithm processes the bulk in
three steps:
\begin{enumerate}[label={\textbf{Step~\arabic*}}, leftmargin=*]
  \item A finger-based \emph{insertion sites search} that, without
    making any modifications, locates all the sites in the tree where
    new entries need to be inserted.
  \item A \emph{pass up: interleave\&split loop} that, starting at the leaves,
    interleaves the
    new entries into their respective nodes, splitting the node and promoting
    keys as necessary to satisfy the arity invariants. This happens from the
    leaves up until no level requires further processing.
  \item A~\emph{pass down} the right spine, and if needed also the
    left spine, that repairs any leftover aggregation invariant violations.
\end{enumerate}
The remainder of this section delves deeper into the details of these steps and
their cost analysis. Later, Section~\ref{sec:implementation} discusses their
implementation and optimization maneuvers.

\medskip
\noindent\textbf{Bulk insertion Step 1: Insertion sites search.}
To locate the insertion sites, the algorithm conducts the search in
timestamp order, beginning with the earliest timestamp in the bulk
using finger search.
Each subsequent search never has to go higher than the least common
ancestor between the previous node and its insertion site.
This step associates each \pseudocode{(timestamp, value)} pair from the
input with the corresponding node into which it will be inserted.

Like in a standard B-tree structure, each new timestamp (key) that is not yet
in the tree will be inserted at a leaf location. Such a key can cause
cascading changes to the tree structure and, in the context of FiBA, can
additionally trigger a chain of recomputation of aggregation values starting
from the insertion site. On the other hand, a timestamp that is already in the
tree is destined to the node where that timestamp is present, where
the aggregation monoid combines its value
with the existing value. This results in no structural changes, but
in the context of FiBA, this triggers a chain of recomputation of aggregation
values starting from that node. We see both cases as events that require
processing:
an \emph{insertion event} adds a real entry to the target node and
recomputes the aggregation value,
whereas a \emph{recomputation event} merely indicates the node where
recomputation must take place.

\medskip
\noindent$\rhd$~\textit{Treelets.}
Concretely, the implementation represents each event as a
\emph{treelet} tuple \pseudocode{(target, timestamp, value, childNode, kind)}.
This indicates that this particular
\pseudocode{(timestamp, value)} pair with a child \pseudocode{childNode}
(possibly \pseudocode{NULL}) is to be inserted into the \pseudocode{target}
node unless the \pseudocode{kind} is a recomputation event, in which case it
simply triggers a recomputation of aggregate values on the \pseudocode{target}
node. Treelets form the backbone of the \pseudocode{bulkInsert} logic, with
Step~1 (the insertion sites search) creating the initial timestamp-ordered sequence of treelets
targeting all the relevant insertion sites. 

\begin{figure}
  \includegraphics[width=2in]{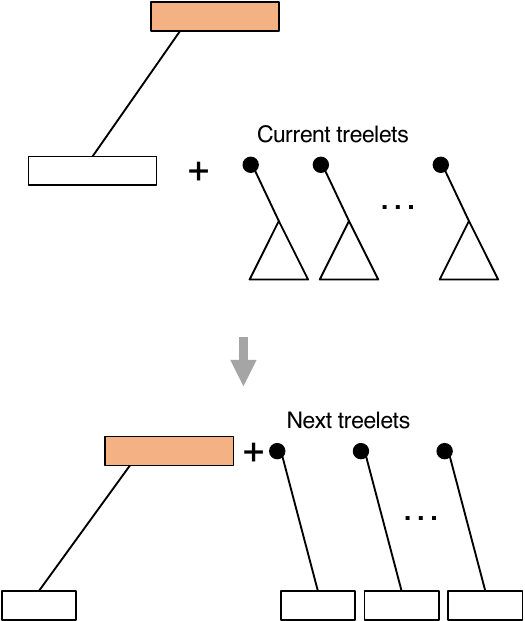}
\caption{Interleave and split for one level of a tree}
\label{fig:interleave-and-split}
\end{figure}
\medskip
\noindent\textbf{Bulk insertion Step 2: Pass up: interleave\&split loop.}
As the next step, the algorithm
proceeds level by level, working its way from the leaf level towards the root
until no more changes happen. At any point, the algorithm aims to maintain only
two levels of treelets---the current level and the next level. In this view, as
illustrated in Figure~\ref{fig:interleave-and-split}, each level takes as input a
sequence of treelets and produces a sequence of treelets for the next level.
Since the treelets in the input are timestamp-ordered, the entries destined for
the same node appear consecutively in the sequence and are easily identified.
Conceptually, each level is processed as follows:
\begin{quote}
\noindent{}For each target $t$ in the input sequence of treelets:
  \begin{enumerate}[label=(\roman*), leftmargin=2em]
  \item Gather all the treelets that target $t$ into \pseudocode{TL}.
  \item Interleave the contents of $t$ with \pseudocode{TL}.
    Since both of these are ordered, the interleave routine is the
    merge step of the well-known merge-sort algorithm.
    Interleaving takes time linear in the total length of its
    input sequences to produce an ordered output sequence without
    requiring a separate sort step.
  \item If $t$ has arity more than \pseudocode{MAX\string_ARITY}, apply
  \pseudocode{bulkSplit} to split it into smaller nodes.
  \end{enumerate}
\end{quote}

When multiple entries are added to the same node, a node can temporarily
overflow to arity $p > \pseudocode{MAX\string_ARITY}=2\mu$, often $p \gg 2\mu$.
The \pseudocode{bulkSplit} routine then splits it into invariant-respecting nodes,
consisting of one or more arity-$(\mu+1)$ nodes and one last node with arity between
$\mu$ and $2\mu$.  The following claim, which is intuitive and whose proof
appears in~\ifVldbElse{the extended
paper~\cite{tangwongsan_hirzel_schneider_2023}}{the appendix}, shows that it is possible
to split such a node into legitimate FiBA nodes in this way:
\begin{claim}
  \label{claim:well-splittable}
  Let $p > \pseudocode{MAX\string_ARITY}~=2\mu$ be an integral temporary arity.
  The number $p$ can be written as 
  \[
    p = b_0 + b_1 + \dots + b_{t-1} + b_t,
  \]
  where $b_0 = b_1 = \dots b_{t-1} = \mu+1$ and $\mu \leq b_t \leq 2\mu$.
\end{claim}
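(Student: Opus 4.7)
The claim is essentially a statement about integer division with a shifted remainder range, so the plan is to reduce it to a single application of the division algorithm and then verify the required bounds on the quotient and remainder.

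First I would unfold the conclusion: we must exhibit a nonnegative integer $t$ and an integer $b_t$ with $p = t(\mu+1) + b_t$ and $\mu \le b_t \le 2\mu$. The standard division algorithm gives a remainder in $[0,\mu]$, not $[\mu,2\mu]$, so the trick is to apply it to $p - \mu$ rather than to $p$ itself. Concretely, write
\[
  p - \mu \;=\; q(\mu+1) + r, \qquad 0 \le r \le \mu,
\]
and set $t = q$ and $b_t = \mu + r$. Then by construction $p = q(\mu+1) + (\mu+r)$ and $\mu \le b_t \le 2\mu$, which is exactly the desired decomposition with $b_0 = \cdots = b_{t-1} = \mu+1$ and $b_t = \mu+r$.

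Second, I would check the two edge conditions that the claim implicitly requires. The hypothesis $p > 2\mu$ means $p \ge 2\mu + 1$, so $p - \mu \ge \mu + 1 > \mu$, which forces $q \ge 1$; otherwise $p - \mu = r \le \mu$, a contradiction. This ensures that at least one copy of $\mu+1$ actually appears, so the $b_0 = \cdots = b_{t-1} = \mu+1$ portion is nonempty and consistent with the $p > 2\mu$ assumption. The remainder bound $r \le \mu$ is immediate from the division algorithm, giving $b_t \le 2\mu$, and $r \ge 0$ gives $b_t \ge \mu$.

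There is no real obstacle here; the only subtlety worth flagging is that one must apply the division algorithm to $p - \mu$, not to $p$, in order to obtain a remainder range of $[\mu, 2\mu]$ instead of $[0, \mu]$. An equivalent formulation is $t = \lfloor (p-\mu)/(\mu+1) \rfloor$ and $b_t = p - t(\mu+1)$, and one can verify the bounds directly from the floor inequalities; either presentation yields a short, self-contained proof.
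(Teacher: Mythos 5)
Your proof is correct and is essentially the paper's argument: both are a single application of Euclidean division by $\mu+1$, and the resulting $t$ and $b_t$ coincide with the paper's in every case. The only difference is packaging — by dividing $p-\mu$ rather than $p$ you get the remainder directly into $[\mu,2\mu]$ and avoid the paper's two-case split on whether the remainder of $p$ equals $\mu$; your check that $p>2\mu$ forces $q\ge 1$ correctly handles the one edge condition the claim needs.
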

For example, if $p = 2\mu + 3$ with $\mu = 4$, we can write $p$ as
\mbox{$p = (\mu + 1) + (\mu + 2)$}.
That is, this split yields one arity-$(\mu+1)$ node, one entry to send up
to the next level, and one arity-$(\mu+2)$ node.
If $p = 7\mu
+ 2$ with $\mu = 2$, we can write $p$ as $p = \mu + 1 + \mu + 1 + \mu + 1 + \mu
+ 1 + 2\mu$. That is, this split yields four arity-$(\mu+1)$ nodes and one arity-$2\mu$
node, interspersed with $4$ entries to send up to the next level. 

\medskip
\noindent$\rhd$~\textit{Promotion to the next level.} Splitting an overflowed
node also generates treelets, representing entries promoted for insertion into
nodes in the next level. Importantly, by processing current-level treelets in
timestamp order, new treelets for the next level generated in this manner are
already sorted in timestamp order. This helps avoid the costly step of sorting
them or the need for a priority queue. Additionally, the parent of each
existing node is the target insertion site of the corresponding promoted entry.

The discussion so far left out recomputation events.
There are two ways a recomputation event is created:
(a)~inserting an entry with an existing timestamp and
(b)~incorporating entries into a node without causing it to overflow.
Case~(a) happens in
Step~1 (insertion sites search) but can target nodes anywhere in the tree, not
just the leaves. Case~(b) happens throughout Step~2 (making a pass up).
Because of how Step~1 is carried out and to sidestep the need to
store treelets for future levels and interleave in treelets for recomputation
events when their levels are reached, we start all the recomputation
events/treelets at the leaf level. These treelets will ride along with the
other treelets but will not have a real effect until their levels are reached. This
turns out to have the same asymptotic complexity as if we were to start them at
their true levels---but without the additional code complexity.

\medskip
\noindent\textbf{Bulk insertion Step 3: Pass down.}
Like in the
\pseudocode{bulkEvict} algorithm, the final step repairs right aggregates on
the right spine and potentially left aggregates on the left spine if it also
touches the left spine. For both spines, the aggregate of a node is computed
using the value from its parent, so this computation is a pass on the spine
towards the finger (i.e., rightmost and leftmost leaf).

\medskip
\noindent\textbf{Bulk insertion: Time complexity analysis.}
The time complexity of
\pseudocode{bulkInsert} can be broken down into (i) the search cost (Step~1),
(ii) insertion and tree restructuring (Step~2), and (iii) aggregation repairs
(during Steps~2 and 3).  To analyze this, we begin by proving a lemma that
quantifies the footprint---the worst-case number of nodes that can be
affected---when there are $m$ insertion sites.

For a \pseudocode{bulkInsert} call, the \emph{top} node, denoted by $\tau$, is
the least-common ancestor of all insertion sites and the rightmost finger.
By definition, this is the node closest to the
leaf level where paths from all these sites towards the root converge.
\begin{lemma}
  \label{lem:insertion-footprint}
  In a FiBA structure with \pseudocode{MAX\string_ARITY}$\,=2\mu$, if there are
  $m$ insertion sites, the paths from all the insertion sites, as well as the
  node at the right finger, to the top $\tau$ contain at most $O(m ( 1 + \log_{2\mu}
  (\tfrac{N_\tau}{m})))$ unique nodes, where $N_\tau$ is the total number of nodes in the
  subtree rooted at $\tau$.
\end{lemma}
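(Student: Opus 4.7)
The plan is to bound the union of the paths level by level. Let $h$ denote the height of the subtree rooted at $\tau$. Since every non-root node in a FiBA has arity at least $\mu$ (with the standard treatment at the root adding only a constant), the subtree of $N_\tau$ nodes satisfies $h = O(\log_{2\mu} N_\tau)$. I would absorb the rightmost finger into the set of sites, so that there are effectively at most $m+1 = O(m)$ sites whose paths to $\tau$ must be counted. Note that any insertion sites lying above the leaf level only shorten paths and therefore cannot increase the count I am about to derive.

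Next, I would stratify by depth: let depth $d$ be measured downward from $\tau$, so $d=0$ at $\tau$ and $d \le h$ at the deepest site. At depth $d$, two independent bounds apply to the number of distinct nodes lying on the union of our paths: (a)~there are at most $m+1$ such nodes, because each site contributes at most one node per level, and (b)~there are at most $(2\mu)^d$ nodes in the whole level since $\tau$'s subtree has fan-out at most $2\mu$. Hence at depth $d$ the contribution is at most $\min(m+1,\, (2\mu)^d)$.

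I would then split the depth range at the crossover point $d^\star = \lceil \log_{2\mu} m \rceil$. For the ``upper'' levels $d \le d^\star$, bound (b) dominates and the contributions form a geometric series summing to $O((2\mu)^{d^\star}) = O(m)$. For the ``lower'' levels $d^\star < d \le h$, bound (a) dominates with $m+1$ nodes per level, and there are at most $h - d^\star = O(\log_{2\mu} N_\tau - \log_{2\mu} m) = O(\log_{2\mu}(N_\tau/m))$ such levels, contributing $O(m \log_{2\mu}(N_\tau/m))$ nodes. Summing the two regimes gives the claimed $O\!\left(m\bigl(1 + \log_{2\mu}(N_\tau/m)\bigr)\right)$.

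The main obstacle is being careful about the boundary of the two regimes and the edge case $m \ge N_\tau$, for which the $\log$ term can be negative; the ``$1+$'' inside the bound absorbs this, and in that case the trivial bound ``all $N_\tau = O(m)$ nodes in the subtree'' already suffices. I would also note that including the right-finger path only changes $m$ to $m+1$ in every estimate, which is asymptotically harmless, and that the arity lower bound $\mu$ is crucial in turning $N_\tau$ into a depth bound $O(\log_{2\mu} N_\tau)$; without it, the ``lower regime'' estimate could blow up.
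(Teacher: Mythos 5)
Your proof takes essentially the same route as the paper's: both stratify the union of paths by level, bound the contribution of level $d$ by $\min(m+1,\ \text{level width})$ using the fan-out bounds $\mu^d \le n_d \le (2\mu)^d$, split at the crossover level $\approx \log_{2\mu} m$, sum a geometric series near the top to get $O(m)$, and charge $m+1$ per level for the remaining $O(\log_{2\mu}(N_\tau/m))$ levels --- your remark that non-leaf sites only shorten paths plays the role of the paper's projection of every site down to a leaf. One caveat you share with the paper's own write-up: the count of ``wide'' levels is really bounded by $\log_\mu N_\tau - \log_{2\mu} m$, which is not literally $O(\log_{2\mu}(N_\tau/m))$ when $N_\tau/m$ is small relative to $N_\tau$ (e.g.\ $N_\tau = m\log m$); the clean fix is to use the actual level widths --- they grow by a factor $\ge \mu$ per level and sum to $N_\tau$, so at most $1+\log_\mu\bigl(N_\tau/(m+1)\bigr)$ levels can exceed $m+1$ nodes, which yields the stated bound.
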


\begin{proof}
  Consider the subtree rooted at the top node $\tau$. For level $\ell = 0, 1,
  \dots$ away from the top, the total number of nodes at that level $n_\ell$
  satisfies 
  \begin{equation}
    \mu^\ell \leq n_\ell \leq (2\mu)^\ell,
    \label{eq:num_node_bound}
  \end{equation}
  which holds because the fan-out degree for non-root\footnote{If $\tau$ is the
  root, the bound is slightly different since the root can have as few as two
  children, but the statement of the lemma remains the same.} nodes is
  between $\mu$ and $2\mu$ (inclusive). Now we will assume all the insertion
  sites are at the leaf level. This can be arranged by projecting every
  insertion site onto a leaf within its own subtree, and doing so can only
  increase the number of nodes contributing to the bound. 
  
  By \eqref{eq:num_node_bound}, the leaves must be at level $L \leq \log_{\mu}
  N_\tau$ and the smallest level $\ell$ that has no more than $m$ nodes is $\ell
  \geq \log_{2\mu} m$. This means the paths from the leaf insertion sites can
  travel without necessarily converging together for $L - \ell$ levels.  During
  this stretch, the number of unique nodes is at most $m(L - \ell) =
  O(m\log_{2\mu}(N_\tau/m))$. From level $\ell$ to the top node, the paths must
  converge as constrained by the shape of the tree. In a B-tree
  with $m$ leaves, the number of nodes at each level decreases geometrically
  towards the top. Hence, there are at most $O(m)$ unique nodes from levels
  $\ell$ and above, for a grand total of $O(m(1 +
  \log_{2\mu}(\tfrac{N_\tau}{m})))$ unique nodes.
\end{proof}

Next, we address the tree restructuring cost:

\begin{lemma}
  \label{lem:fiba-bulk-restructuring}
  Let $\mu \geq 2$. The tree restructuring cost of inserting $m$ entries in a
  bulk is amortized $O(m)$ and worst-case $O(m\log (\frac{m+n}{m}))$, where $n$
  is the number of entries prior to the bulk insertion operation.
\end{lemma}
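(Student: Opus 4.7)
The plan is to handle the two bounds separately: the worst-case bound will follow from Lemma~\ref{lem:insertion-footprint} combined with a careful accounting of how treelets decay up the tree, while the amortized bound will reuse the imaginary-coin potential argument from the FiBA paper~\cite{tangwongsan_hirzel_schneider_2019}.

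For the worst-case bound, I first charge the restructuring work node-by-node. At each touched node $v$ on some level, the interleave-and-split step costs $O(a_v + k_v)$, where $a_v \leq 2\mu$ is $v$'s pre-existing arity and $k_v$ is the number of treelets delivered to it; by Claim~\ref{claim:well-splittable} the splitting phase can be done in linear time in $a_v + k_v$. Summing gives total cost $O(U + \sum_\ell T_\ell)$, where $U = \sum_\ell U_\ell$ is the total number of unique touched nodes across levels and $T_\ell$ is the treelet count at level $\ell$. Applying Lemma~\ref{lem:insertion-footprint} with $N_\tau \leq m+n$ yields $U = O(m(1 + \log_{2\mu}\tfrac{m+n}{m}))$.

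Next I bound $\sum_\ell T_\ell$. The initial count $T_0 = O(m)$ covers both real inputs and the recomputation events that Step~1 deposits at the leaves. By Claim~\ref{claim:well-splittable}, a node with post-interleave arity $p$ promotes at most $\lceil (p-\mu)/(\mu+1)\rceil \leq p/(\mu+1)$ new treelets upward. Summing over all nodes at level $\ell$ gives the recurrence $T_{\ell+1} \leq (T_\ell + 2\mu U_\ell)/(\mu+1)$. Since $\mu \geq 2$ the coefficient on $T_\ell$ is at most $1/3 < 1$, so unrolling the geometric recurrence yields $\sum_\ell T_\ell = O(T_0 + \sum_\ell U_\ell) = O(m + U)$. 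Combining the two pieces gives the worst-case bound $O(m\log\tfrac{m+n}{m})$.

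For the amortized bound, I reuse the coin scheme from Section~\ref{sec:fiba}: each nearly-overflowing node holds coins that prepay its eventual split, and each inserted entry deposits $O(1)$ coins along its insertion path to fund those accumulations. A split consumes the coins present at the splitting node and places at most one coin on the single entry it promotes to the parent. This is the classical B-tree amortization, and it shows that $m$ insertions trigger a total of $O(m)$ amortized splits and non-split node touches; combined with the $O(1)$ cost per such touch this gives the $O(m)$ amortized bound. The primary obstacle I anticipate is justifying the treelet recurrence cleanly in the presence of recomputation-event treelets that ride along with real inserts without themselves causing splits; the likely fix is to account for them separately and bound their total count by $O(U)$, using that at most one recomputation event is generated per touched node so they contribute only additively to the geometric sum.
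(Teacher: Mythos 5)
Your worst-case half is sound and in fact more careful than the paper's own one-line justification: the paper simply invokes Lemma~\ref{lem:insertion-footprint} to bound the touched nodes, whereas you additionally bound the total treelet traffic $\sum_\ell T_\ell$ via the geometric recurrence $T_{\ell+1}\le (T_\ell+2\mu U_\ell)/(\mu+1)$, which is needed to justify that the per-node interleave cost $O(a_v+k_v)$ sums to $O(m+U)$ rather than just citing a node count. That piece I would keep.

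The amortized half has a genuine gap. You write that ``a split consumes the coins present at the splitting node and places at most one coin on the single entry it promotes to the parent'' and call this the classical B-tree amortization. But under \pseudocode{bulkInsert} a node can receive many entries at once and reach temporary arity $p=2\mu+k$ with $k\gg 1$; by Claim~\ref{claim:well-splittable}, \pseudocode{bulkSplit} then carves it into $t=\Theta(k/\mu)$ nodes and promotes $t-1$ entries to the parent in one shot. The classical potential, which keeps only $O(1)$ coins on a nearly-full node, cannot pay for $\Theta(k/\mu)$ new nodes plus coins on $\Theta(k/\mu)$ promoted entries plus the reserve that each residual node (which may land exactly at arity $\mu$ or $2\mu$) must itself retain. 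The paper closes this by extending the coin function so that a node of arity $2\mu+k$ holds $2+2k$ coins --- i.e., each of the $k$ overflow entries contributes its own 2-coin charge to the node it lands in --- and then verifies by a two-case computation (on $r=a \bmod (\mu+1)$) that this reserve covers the $3t\pm O(1)$ coins needed to send entries up, pay for the split, and re-seed the residual nodes, using $\mu\ge 2$. Your proposal never extends the potential to over-full arities and never does this arithmetic, so the multi-way split is unpaid for as written. (The obstacle you do flag, about recomputation-event treelets, is comparatively harmless: they cause no structural change and are absorbed into the $O(U)$ term exactly as you suggest.)
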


The worst-case bound can be easily seen: the $m$ insertions can only change the
nodes from $m$ leaves to the root, touching at most $O(m\log (\frac{m+n}{m}))$
nodes (Lemma~\ref{lem:insertion-footprint}). For the amortized bound, the proof
is analogous to Lemma~9 in the FiBA
paper~\cite{tangwongsan_hirzel_schneider_2019}, arguing that charging $2$ coins
per new entry is sufficient in maintaining the tree. More details appear
in~\ifVldbElse{the extended paper~\cite{tangwongsan_hirzel_schneider_2023}}{the appendix}.

\begin{theorem}
  The algorithm for \pseudocode{bulkInsert} runs in amortized $O(\log d + m( 1 + \log
  (\tfrac{d}{m})))$ time and $O(\log d + m \log (\frac{m+n}{m}))$ worst-case time, where $m$ is
  the number of entries in the bulk and $d$ is the out-of-order distance of the
  earliest entry in the bulk.
\end{theorem}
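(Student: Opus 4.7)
The plan is to decompose the running time by the three algorithmic steps and then bound each using Lemmas~2 and~3 together with elementary tree-height reasoning. Let $\tau$ be the top node, i.e., the least common ancestor of the $m$ insertion sites and the right finger, and let $N_\tau$ denote the number of entries in its subtree. The key observation that drives all three bounds is $N_\tau = O(d)$: since the earliest bulk entry is at out-of-order distance $d$ from the finger, and since $\tau$'s subtree encloses both the finger and all insertion sites, $\tau$ can be chosen as an ancestor whose subtree spans only the $O(d)$ rightmost window entries. Hence $\log N_\tau = O(\log d)$, and Lemma~2 specializes to $O(m(1 + \log (d/m)))$ unique affected nodes.

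For Step~1, I would split the search into two sub-phases. The climb from the right finger up to $\tau$ costs $O(\log d)$ by finger search, since $\tau$'s height is $O(\log d)$. After that, the downward traversal from $\tau$ to each of the $m$ insertion sites never re-visits a node, so the total number of nodes touched is bounded by the footprint of Lemma~2; charging $O(1)$ work per node visited yields $O(m(1+\log(d/m)))$ for the descent, matching the amortized bound and subsumed by the worst-case bound.

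For Step~2, at each affected node the interleave-and-split logic does work proportional to the arity of the node plus the number of treelets targeting it. Since every node has arity $O(\mu) = O(1)$, the interleave work sums to $O(m + (\text{number of affected nodes})) = O(m(1+\log(d/m)))$, again by Lemma~2. The structural cost of splits and promotions is exactly the tree-restructuring cost bounded by Lemma~3: amortized $O(m)$ and worst-case $O(m\log((m+n)/m))$. The aggregate recomputations done inside Step~2 are charged $O(1)$ per affected node, so they are absorbed into the same footprint bound. Step~3's pass down on the right spine (and possibly the left spine) starts from $\tau$ and walks to a leaf, costing $O(\log d)$ because $\tau$ has height $O(\log d)$.

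Summing the three steps gives amortized $O(\log d) + O(m(1+\log(d/m))) + O(m) + O(\log d) = O(\log d + m(1+\log(d/m)))$ and worst-case $O(\log d + m(1+\log(d/m))) + O(m\log((m+n)/m)) = O(\log d + m\log((m+n)/m))$, since $m(1+\log(d/m)) = O(m\log((m+n)/m))$ whenever $d \leq m+n$. The main obstacle will be justifying that $N_\tau = O(d)$ cleanly, because $\tau$ is defined relative to the insertion sites and the right finger, so one has to argue that an ancestor encompassing the finger and a window region of size $d$ around it is sufficient; a secondary subtlety is that recomputation events are initiated at the leaf level rather than at their true target levels, which must be shown not to inflate the footprint asymptotically, as it only causes extra treelets to ride along the same paths already counted by Lemma~2.
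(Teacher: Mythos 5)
Your proof follows essentially the same route as the paper's: decompose the cost into the three steps, charge the search and aggregation repairs to the footprint of Lemma~\ref{lem:insertion-footprint}, and invoke Lemma~\ref{lem:fiba-bulk-restructuring} for the restructuring, summing to the stated bounds. The one place you go beyond the paper is in making explicit the bridge from $N_\tau$ to $d$ (which the paper silently elides), and there your key claim $N_\tau = O(d)$ is literally too strong: the subtree of $\tau$ can contain up to roughly $d^{1+\log_\mu 2}$ entries (e.g., $\Theta(d^2)$ for $\mu=2$), so only $\log N_\tau = O(\log d)$ holds, and converting $m(1+\log(N_\tau/m))$ into $m(1+\log(d/m))$ requires a per-level width argument --- the affected nodes at level $\ell$ above the leaves lie among the rightmost $O(d/\mu^\ell)+O(1)$ nodes at that level --- rather than a direct substitution of $d$ for $N_\tau$.
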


\begin{proof}
  The running time of \pseudocode{bulkInsert} is made up of (i) the search cost
  (Step~1), (ii) insertion and tree restructuring (Step~2), and (iii) aggregation repairs
  (during Steps~2 and 3).
  
  The first search for the insertion site takes $O(\log d)$, thanks
  to finger searching from the right finger. Each subsequent search only
  traverses the path from the previous entry to their least common ancestor and
  down to the next entry. The whole search cost is therefore covered by
  Lemma~\ref{lem:insertion-footprint}. After that, the actual insertion takes
  $O(1)$ time per entry since interleaving takes time that is
  linear in its input. The cost to further restructure the tree is as described
  in Lemma~\ref{lem:fiba-bulk-restructuring}. Finally, it is easy to see that
  the cost of aggregation recomputation/repairs is subsumed by the first two
  costs because the aggregation of a node has to be recomputed only if it was
  part of the restructuring or sits on the search path (spine or on the way to
  the top node). Adding up the costs yields the stated bounds.
\end{proof}

This means asymptotically \pseudocode{bulkInsert} is never more
expensive than individually inserting entries. On the contrary, bulk insertion
results in cost savings as insertion-site search and restructuring work can be
shared.

\section{Implementation}\label{sec:implementation}

We implemented our algorithm in C++ because of its strong and
predictable raw performance in terms of both time and space.
Using C++ avoids latency spikes from runtime services, such as garbage
collection or just-in-time compilation, common in managed
languages such as Java or Python.
Such extraneous latency spikes would obscure the latency effects of
our algorithm.
The results section contains apples-to-apples comparisons with
other sliding-window aggregation algorithms from prior work that was
also implemented in C++.
We reuse code between our new algorithm and those earlier algorithms.
In particular, we use C++ templates to specialize each algorithm for
each given aggregation monoid, and share the same implementation of
the aggregation monoids across all algorithms.
The C++ compiler then inlines both the monoid's data structure and its
operator code into the sliding-window aggregation data structure and
algorithm code as appropriate.

\medskip
\noindent{}\textbf{Deferred free list.} 
Our implementation has to avoid reclaiming memory eagerly.
If bulk eviction reclaimed memory eagerly, the promised algorithmic
complexity would be spoiled: Given that the arity of the tree is controlled by a constant
hyperparameter \pseudocode{MIN\_ARITY}, eagerly evicting a bulk of $m$
entries would require reclaiming the memory of $O(m)$ nodes.
Those $O(m)$ calls to \pseudocode{delete} would be worse than the
amortized complexity of $O(\log m)$ for bulk evict.
Therefore, we avoid eager memory reclamation as follows:
Recall that the eviction loop iterates over $O(\log m)$ nodes on the
boundary and, for each node, performs local evictions, which will proceed to
evict the children of that node.
Instead of recursively deleting all the descendants eagerly, the local evict
places their children on a deferred free-list.
Since at most $O(\log m)$ nodes can be removed, the cost of adding only the
children to the free list during bulk eviction is worst-case $O(\log m)$.
Later, when an insertion would require a new allocation, it first
checks the free-list.
If that is non-empty, it pops one node, pushes its children, and
reuses its memory for the new node.
Thus, each insert only spends worst-case $O(1)$ time on memory reuse.

\smallskip
\noindent{}\textbf{Memory mangement during \mbox{\normalfont\pseudocode{bulkInsert}.}} 
Conceptually, we allow a node to grow to an arbitrary size before splitting it
into invariant-respecting smaller nodes. For performance, the implementation
does this differently. The main goal is to minimize memory allocation and
deallocation for intermediate storage. 
To combine keys from an existing node with keys to be inserted into that node, it
employs an ordered interleaving routine from merge sort. Here the interleaving is lazy:
instead of generating the combined sequence of keys upfront, our
implementation offers an iterator for the interleaved sequence that computes the
next element on the fly, reading directly from two sources---the existing node
and the sequence of treelets for the current level.  We also have an
optimization where if the node is not going to overflow after incorporating the
new keys (``small insertion''), then simple insertion is used as there would be
no memory allocation involved.  

Additional optimization includes (i) using alternating buffers for treelet
processing and (ii) consolidating treelets. For treelet processing, the
dataflow pattern is reading from the current level and writing to the next
level. Each sequence is progressively smaller as the algorithm works its way up
the tree. Hence, we allocate two \pseudocode{vector}s with enough capacity at
the start and alternate between them as the algorithm proceeds. Furthermore,
treelets that will be inserted into the same node are consolidated together.
This reduces the struct size because the target node does not need to be
repeated for each of these treelets.

\smallskip
\noindent{}\textbf{Miscellanea.}
As described, our algorithm already combines entries with the same timestamp at
insert, thus reducing memory.  Users can choose to coarsen the granularity of
timestamps, thus causing more cases of equal timestamps, recovering basic
batching. However, it would require additional work to take full advantage of
batching, such as for energy efficiency~\cite{michalke_et_al_2021}.
Our implementation does not directly use SIMD instructions, but the C++
optimizing compiler sometimes uses them automatically.
We did not implement partitioning but it is straightforward: when the aggregate
is partitioned by key, keep disjoint state, i.e., a separate tree for each key;
that would enable fission~\cite{hirzel_schneider_gedik_2017} for parallelization, 
either user-directed or automatically.
Previous work describes an algorithm for range
queries~\cite{tangwongsan_hirzel_schneider_2019}, and that algorithm also works
in the presence of bulk insertion and eviction. Future work could pursue a new
algorithm for multi-range queries.

\begin{figure*}
\includegraphics[width=.32\textwidth]{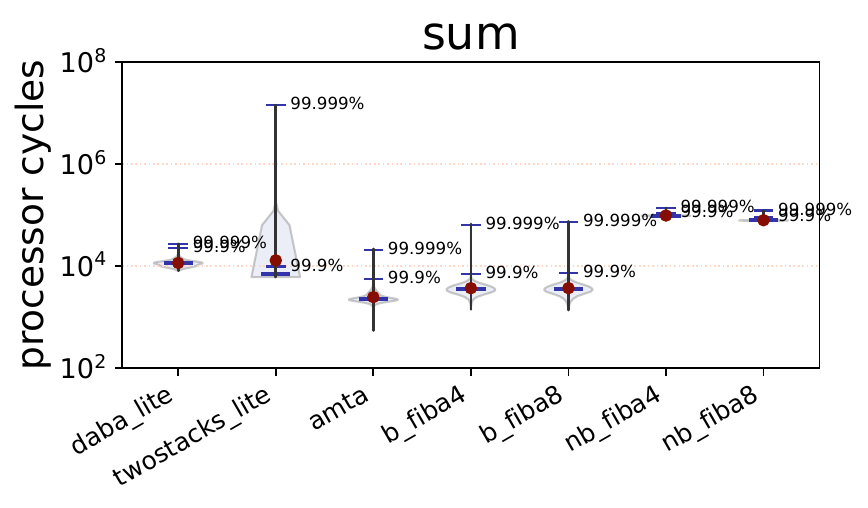}%
\hfill%
\includegraphics[width=.32\textwidth]{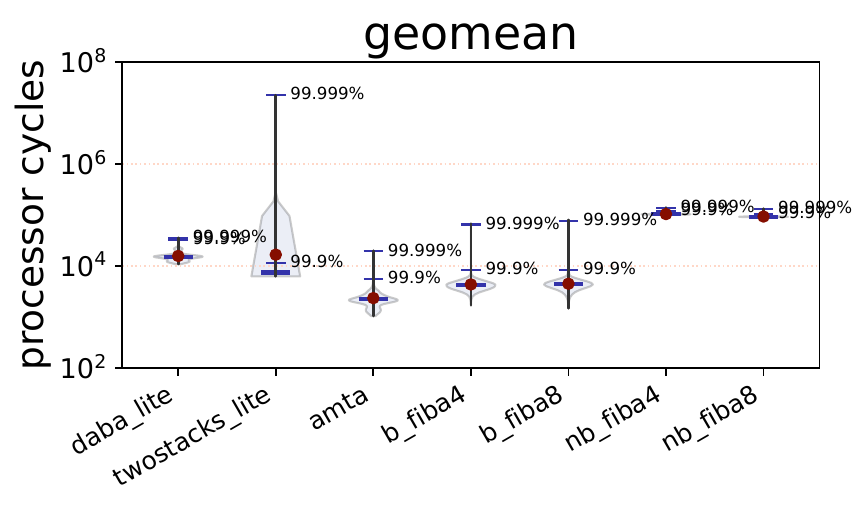}%
\hfill%
\includegraphics[width=.32\textwidth]{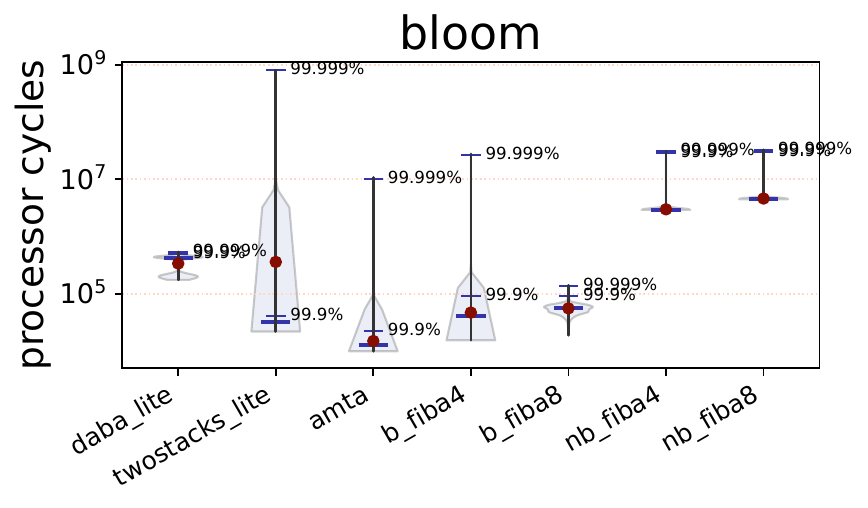}
\vspace*{-5mm}
\caption{\label{fig:latency_bulk_evict_opevict_w4194304_d0_b1024}Latency, bulk evict only, window size $n=\textrm{4,194,304}$, bulk size $m=\textrm{1,024}$, in-order data $d=\textrm{0}$.}
\vspace*{-2mm}
\end{figure*}

\begin{figure*}
\includegraphics[width=.32\textwidth]{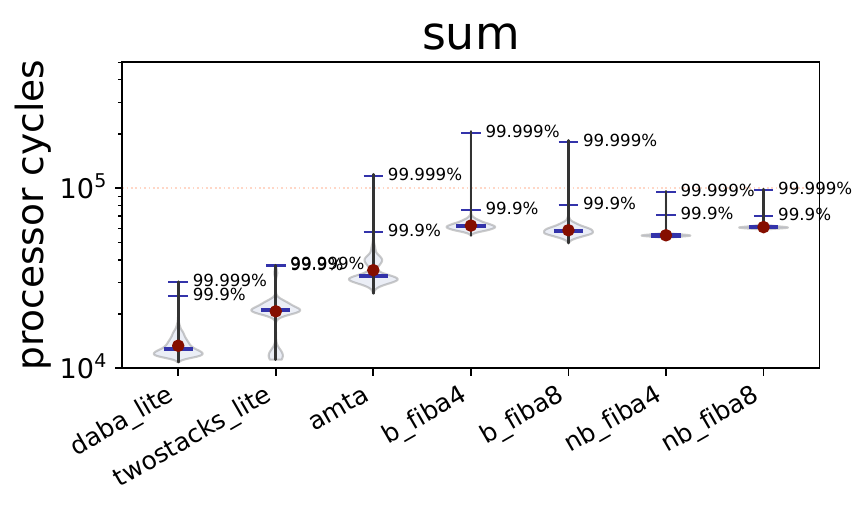}%
\hfill%
\includegraphics[width=.32\textwidth]{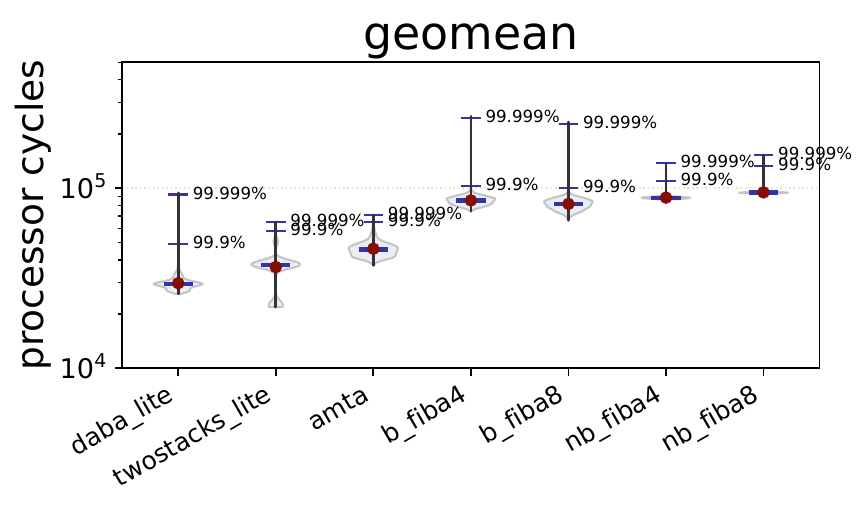}%
\hfill%
\includegraphics[width=.32\textwidth]{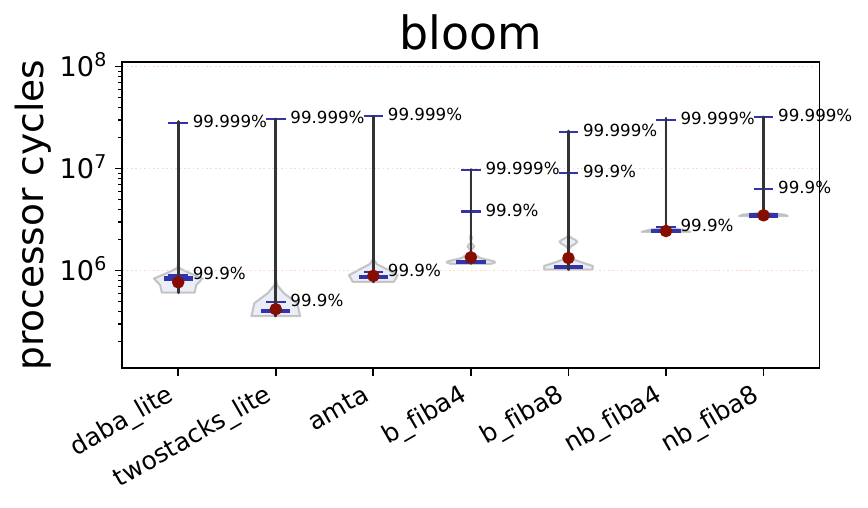}
\vspace*{-5mm}
\caption{\label{fig:latency_bulk_evict_insert_opinsert_w4194304_d0_b1024}Latency, bulk insert only, window size $n=\textrm{4,194,304}$, bulk size $m=\textrm{1,024}$, in-order data $d=\textrm{0}$.}
\end{figure*}

\section{Results}\label{sec:results}

%% \todo{R4(D7). You could improve your evaluation by discussing findings
%%   in more detail. Currently, some additional clarification by how much
%%   you outperform the baselines and why is needed.}

%% \todo{R4(D8). Overall, your figures and charts could be easier to read
%%   on gray-scale. In particular, the individual lines in your plots are
%%   hard to distinguish.}

This section explores how the theoretical algorithmic complexity
from the previous sections play out in practice.
It explores how performance correlates with
the number $n$ of entries in the window,
the number $m$ of entries in the bulk insert or bulk evict, and
the number $d$ of entries between an insertion and the youngest
end of the window.
The experiments use multiple monoidal aggregation operators
to cover a spectrum of computational cost: \pseudocode{sum}~(fast),
\pseudocode{geomean}~(medium), and
\pseudocode{bloom}~\cite{bloom_1970}~(slow).

This section refers to different
sliding-window aggregation algorithms as follows:
The original non-bulk FiBA
algorithm~\cite{tangwongsan_hirzel_schneider_2019} is
\pseudocode{nb\_fiba4} and \pseudocode{nb\_fiba8}, with
\pseudocode{MIN\_ARITY} of 4 or~8.
Similarly, the new bulk FiBA algorithm introduced in this paper is
\pseudocode{b\_fiba4} and \pseudocode{b\_fiba8}.
Both of these algorithms can handle out-of-order data.
As baselines, several figures include three algorithms
that only work for in-order data, i.e., \mbox{when $d=0$}.
The amortized monoid tree aggregator, \pseudocode{amta}, supports bulk
evict but not bulk insert~\cite{villalba_berral_carrera_2019}.
The \pseudocode{twostacks\_lite} algorithm performs single insert or
evict operations in amortized $O(1)$ and worst-case $O(n)$
time~\cite{tangwongsan_hirzel_schneider_2021}.
The \pseudocode{daba\_lite} algorithm performs single insert or
evict operations in worst-case $O(1)$
time~\cite{tangwongsan_hirzel_schneider_2021}.
Since \pseudocode{amta}, \pseudocode{twostacks\_lite}, and
\pseudocode{daba\_lite} require in-order data, they are absent from
figures with results for out-of-order scenarios.

We ran all experiments on a machine with dual Intel Xeon Silver 4310
CPUs at 2.1~GHz running Ubuntu 20.04.5 with a 5.4.0 kernel. We compiled
all experiments with  \verb@g++@ 9.4.0 with optimization level \verb@-O3@.
To reduce timing noise and variance in memory
allocation latencies, we use \verb@mimalloc@~\cite{leijen_zorn_demoura_2019}
instead of the stock
glibc allocator, and we pin all runs to core~0 and the corresponding
NUMA group.

\subsection{Latency}\label{sec:results_latency}

In streaming applications, late results are often all but useless: with
increasing latency, the value of a streaming computation reduces
sharply---for example, dangers become too late to avert and opportunities are
missed.
% For example, if the application reacts too late to some danger, that
% danger becomes impossible to avert.
% Similarly, if the application reacts too late to some opportunity,
% that opportunity passes.
Therefore, our algorithm is designed to support both the finest
granularity of streaming~(i.e., when $m=1$) as well as bursty
data~(i.e., when~$m\gg 1$) with low latencies.
Even in the latter case, our algorithm still retains the ability of
tuple-at-a-time streaming, unlike systems with a micro-batch model.
The methodology for the latency experiments is to measure how long
each individual insert or evict takes, then visualize the distribution
of insertion or eviction times for an entire run as a violin plot.
The plots indicate the arithmetic mean as a red dot, the median as a
thick blue line, and the 99.9\textsuperscript{th} and
99.999\textsuperscript{th} percentiles as thin blue lines.
At 2.1~GHz, $10^4$ processor cycles correspond to 4.8~microseconds.

\textbf{Figure~\ref{fig:latency_bulk_evict_opevict_w4194304_d0_b1024}}
shows the latencies for bulk evict with in-order data.
This experiment loops over
evicting the oldest $m=1,024$ entries in a single bulk,
inserting 1,024 new entries one by one,
and calling \pseudocode{query},
measuring only the time that the bulk evict takes.
In theory, we expect bulk evict to take time $O(\log m)$ for
\pseudocode{b\_fiba4} and \pseudocode{b\_fiba8}, and $O(\log n)$ for
\pseudocode{amta}.
The remaining algorithms, lacking a native bulk evict, loop over
single evictions, taking $O(m)$ time.
In practice, \pseudocode{b\_fiba4}, \pseudocode{b\_fiba8}, and
\pseudocode{amta} have the best latencies for this experiment,
confirming the theory.

\textbf{Figure~\ref{fig:latency_bulk_evict_insert_opinsert_w4194304_d0_b1024}}
shows the latencies for bulk insert with in-order data.
This experiment loops over
evicting the oldest $m=1,024$ entries in a single bulk,
inserting $m=1,024$ new entries in a single bulk,
and calling \pseudocode{query},
measuring only the time that the bulk insert takes.
In theory, since $d=0$ in this in-order scenario, the complexity of
bulk insert boils down to $O(m)$ for all considered algorithms.
In practice, \pseudocode{daba-lite} and \pseudocode{twostacks-lite}
yield the best latencies for this scenario since they incur no extra
overhead to be ready for an out-of-order case that does not occur here.

\begin{figure*}
\includegraphics[width=.32\textwidth]{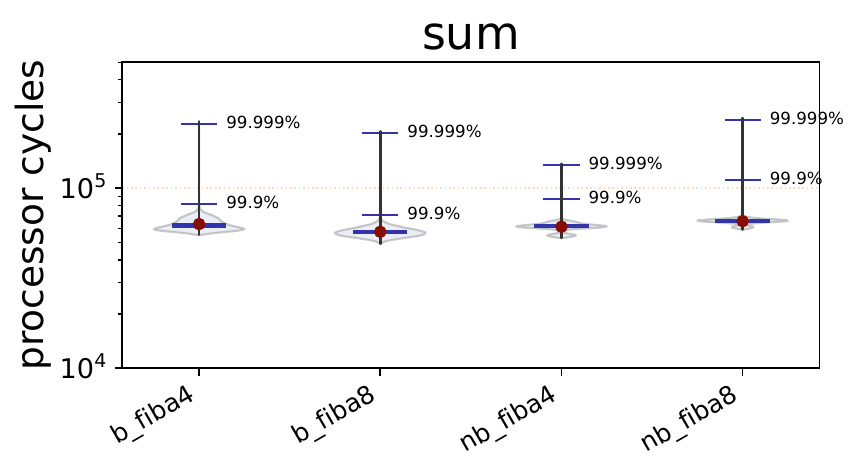}%
\hfill%
\includegraphics[width=.32\textwidth]{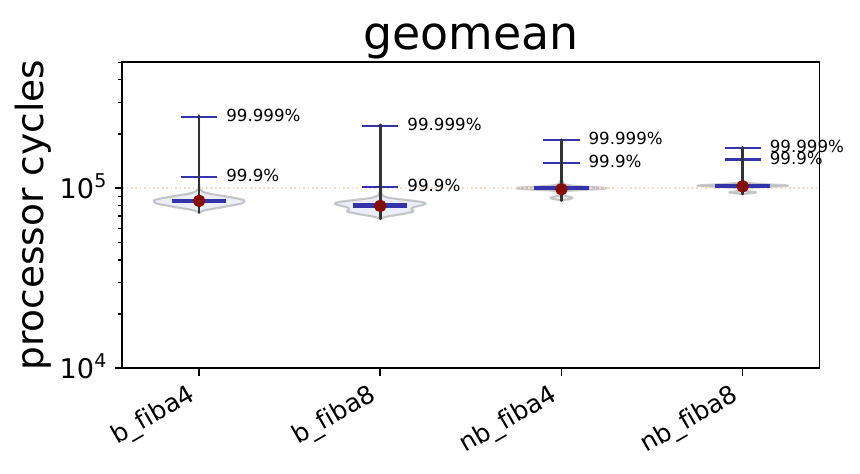}%
\hfill%
\includegraphics[width=.32\textwidth]{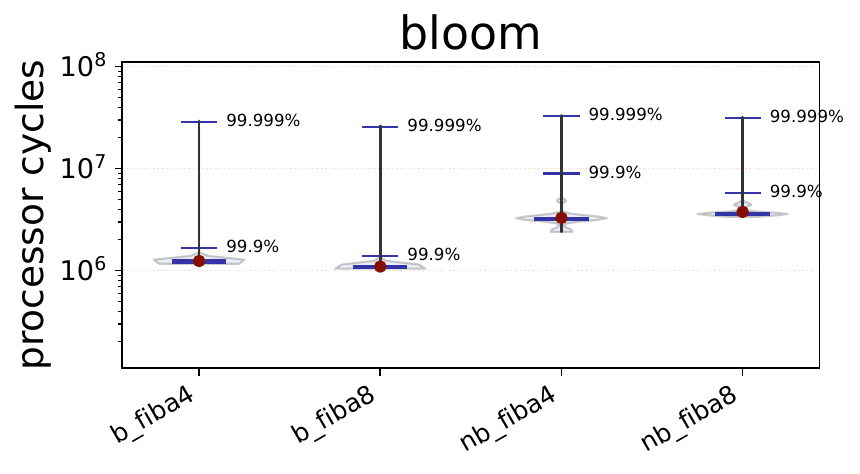}
\vspace*{-5mm}
\caption{\label{fig:latency_bulk_evict_insert_opinsert_w4194304_d1024_b1024}Latency, bulk insert only, window size $n=\textrm{4,194,304}$, bulk size $m=\textrm{1,024}$, out-of-order data $d=\textrm{1,024}$.}
\end{figure*}

\textbf{Figure~\ref{fig:latency_bulk_evict_insert_opinsert_w4194304_d1024_b1024}}
shows the latencies for bulk insert with out-of-order data.
This experiment differs from the previous one in that each bulk insert
happens at a distance of $d=1,024$ from the youngest end of the window.
Since \pseudocode{amta}, \pseudocode{twostacks\_lite}, and
\pseudocode{daba} only work for in-order data, they cannot participate
in this experiment.
In theory, we expect bulk insert to take $O(m\log\frac{d}{m})$ for
\pseudocode{b\_fiba} and $O(m\log d)$ for \pseudocode{nb\_fiba}, which
is worse.
In practice, \pseudocode{b\_fiba} has lower latency than
\pseudocode{nb\_fiba}, confirming the theory.

%% This is because our new algorithm speeds things up by not repeating
%% the $O(\log d)$ search $m$ times.

\begin{figure}
\centerline{\includegraphics[width=.32\textwidth]{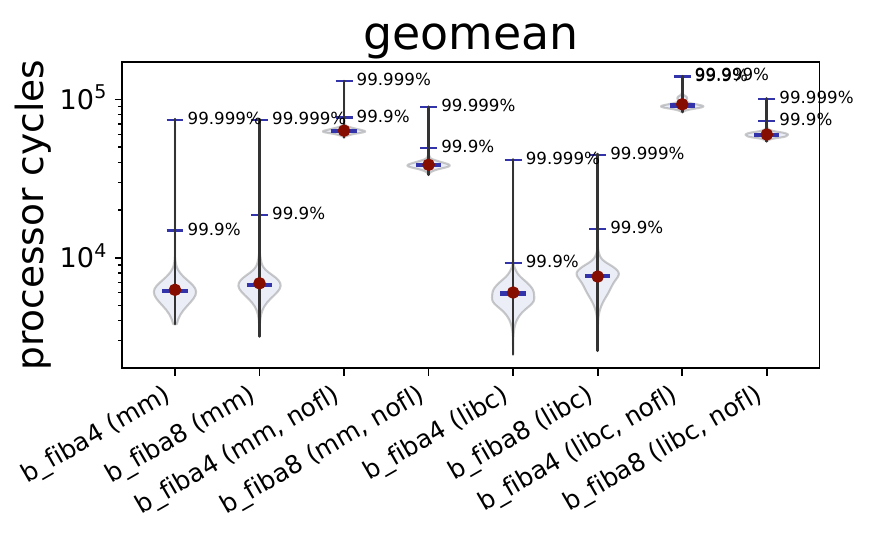}}
\vspace*{-4mm}
\caption{\label{fig:ablation}Memory management ablation study (latency, bulk evict only, $n=\textrm{4,194,304}$, $m=\textrm{4,096}$, $d=0$).}
\vspace*{-4mm}
\end{figure}

\textbf{Figure~\ref{fig:ablation}} shows an ablation
  experiment for memory-manage\-ment related implementation details.
  It compares results with mimalloc~(\pseudocode{mm}) vs.\ the default memory
  allocator~(\pseudocode{libc}), and with or without~(\pseudocode{nofl}) the deferred free list
  from Section~\ref{sec:implementation}.
  Consistent with the theory, the deferred
  free list is indispensible: \pseudocode{nofl} performs much worse. On the other hand,
  mimalloc made little difference; we use it to control for events that are so
  rare that they did not manifest in this experiment.

\subsection{Throughput}\label{sec:results_throughput}

%% Besides latency, another important metric for evaluating streaming
%% algorithms is
Throughput is the number of items in a long but
finite stream divided by the time it takes to process that stream.
The throughput experiments thus do not time each insert or evict
operation individually.
While the time for each individual operation may differ, we already saw those
distributions in the latency experiments, and here we focus on the gross
results instead.
%% As mentioned earlier, our algorithm is designed to support streaming
%% at the finest granularity.
%% Therefore,
The experiments include a memory fence before every insert to prevent the
compiler from optimizing (e.g., using SIMD) across multiple stream data items,
as that would be unrealistic in fine-grained streaming.
All throughput charts show error bars based on repeating each run
five times.

\begin{figure*}
\centerline{\includegraphics[width=\textwidth]{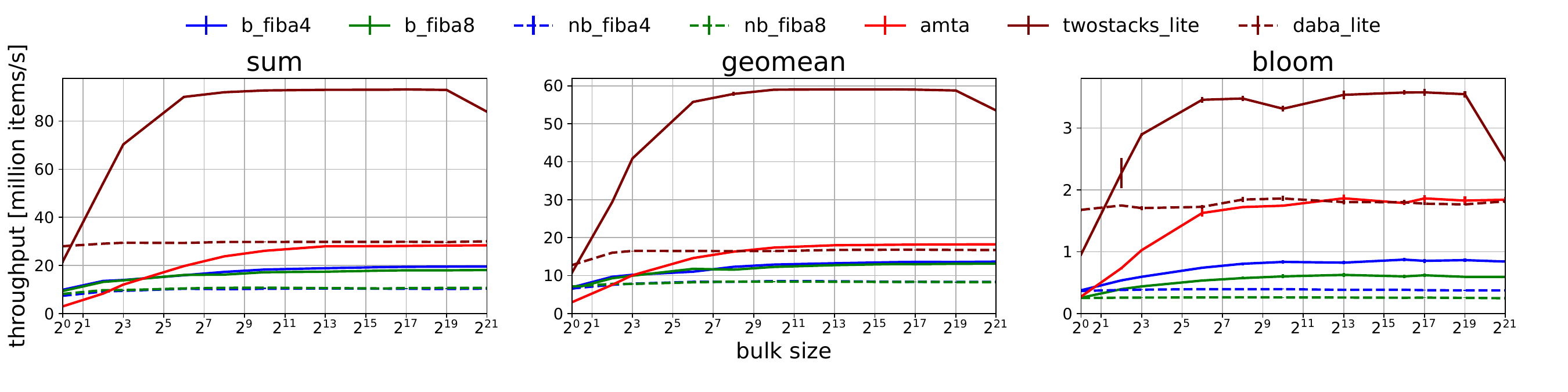}}
\vspace*{-4mm}
\caption{\label{fig:bulk_evict_window4194304}Throughput, bulk evict only, window size $n=\textrm{4,194,304}$, varying bulk size $m$, in-order data $d=\textrm{0}$.}
\end{figure*}

\textbf{Figure~\ref{fig:bulk_evict_window4194304}}
shows the throughput for running with bulk evict for in-order data as
a function of the bulk size~$m$.
This experiment loops over
a single call to \pseudocode{bulkEvict} for the oldest $m$ entries,
$m$ calls to single \pseudocode{insert},
and a call to \pseudocode{query}.
The throughput is computed from the time for the entire run, which
includes all these operations.
In theory, we expect the throughput of \pseudocode{b\_fiba} and
\pseudocode{amta} to improve with larger bulk sizes as they
have native bulk eviction. In practice, while that is true, even for algorithms that do not
natively support bulk evict, throughput also improves with larger~$m$.
This may be because their internal loop for emulating bulk evict
benefits from compiler optimization.
For in-order data, \pseudocode{twostacks\_lite} yields the
best throughput (but not the best latency, see
Section~\ref{sec:results_latency}).

\begin{figure*}
\centerline{\includegraphics[width=\textwidth]{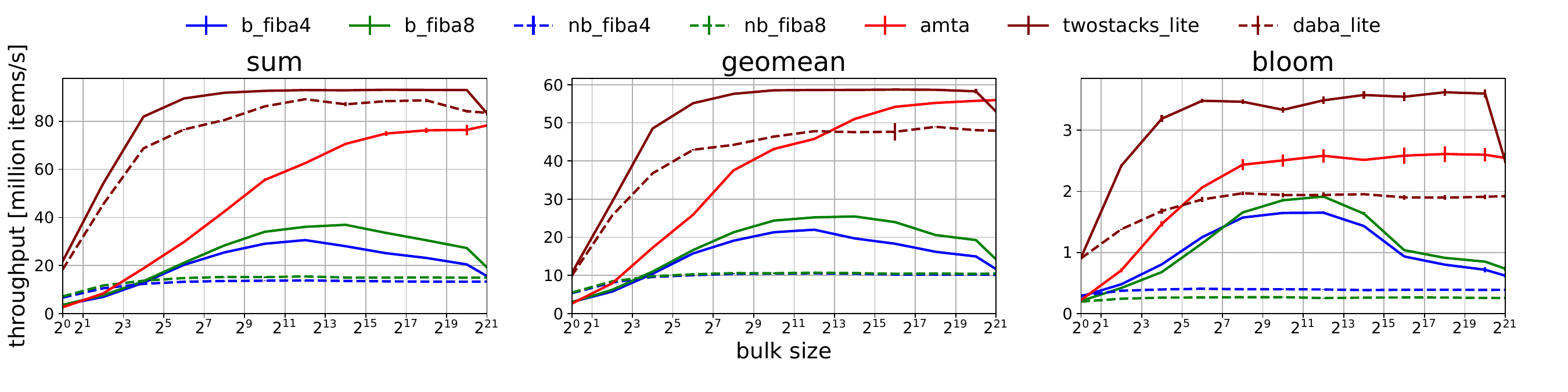}}
\vspace*{-4mm}
\caption{\label{fig:bulk_evict_insert_window4194304}Throughput, bulk evict+insert, window size $n=\textrm{4,194,304}$, varying bulk size $m$, in-order data $d=\textrm{0}$.}
\end{figure*}

\textbf{Figure~\ref{fig:bulk_evict_insert_window4194304}}
shows the throughput for running with both bulk evict and bulk insert
for in-order data as a function of the bulk size~$m$.
In theory, we expect that since the data is in-order, bulk insert
brings no additional advantage over looping over single inserts.
In practice, all algorithms improve in throughput as $m$ increases
from $2^0$ to around $2^{12}$.
This may be because fewer top-level insertions means fewer memory
fences, even for algorithms that emulate bulk insert with loops.
Furthermore, throughput drops when $m$ gets very large, because the
implementation needs to allocate more temporary space to hold data
items before they are inserted in bulk.

\begin{figure*}
\centerline{\includegraphics[width=\textwidth]{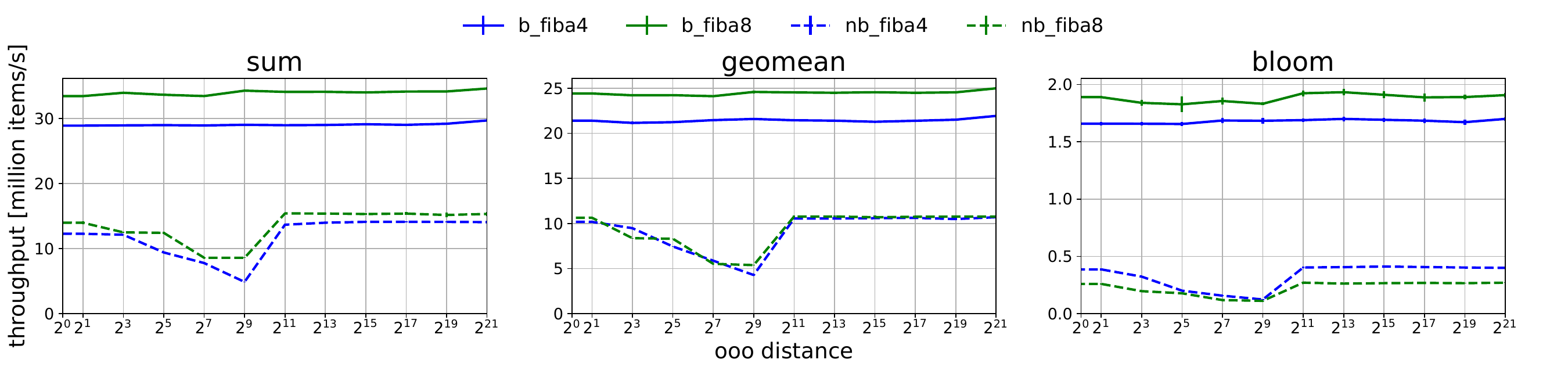}}
\vspace*{-4mm}
\caption{\label{fig:bulk_evict_insert_bulksize1024}Throughput, bulk evict+insert, window size $n=\textrm{4,194,304}$, bulk size $m=\textrm{1,024}$, varying ooo distance $d$.}
\end{figure*}

\textbf{Figure~\ref{fig:bulk_evict_insert_bulksize1024}}
shows the throughput as a function of the out-of-order degree~$d$ when
running with both bulk evict and bulk insert.
The \pseudocode{amta}, \pseudocode{twostacks\_lite}, and
\pseudocode{daba} algorithms do not work for out-of-order data and
therefore cannot participate in this experiment.
In theory, we expect that thanks to only doing the search once per
bulk insert, higher $d$ should not slow things down.
In practice, we find that that is true and \pseudocode{b\_fiba}
outperforms \pseudocode{nb\_fiba}.

\begin{figure*}
\centerline{\includegraphics[width=\textwidth]{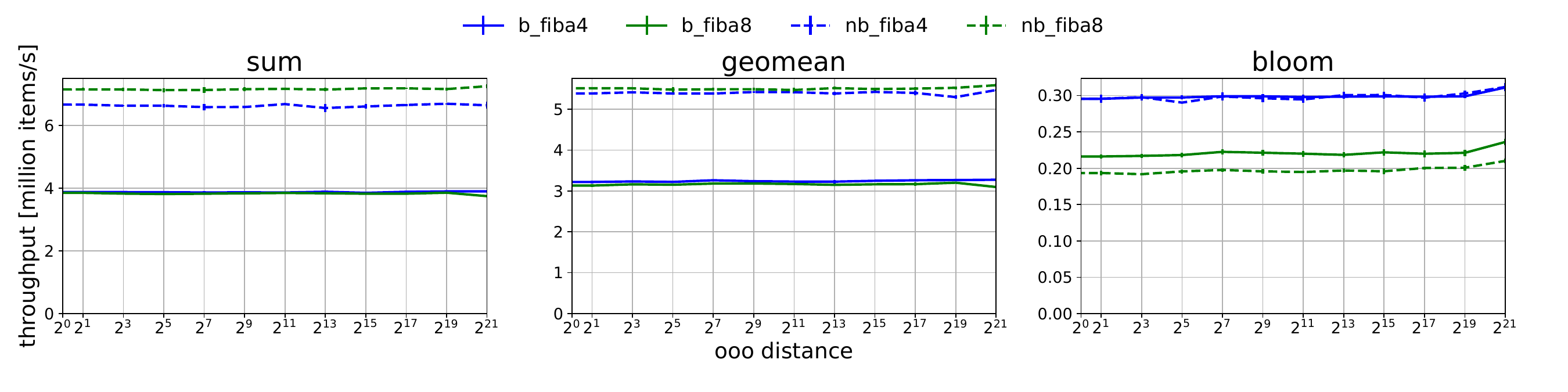}}
\vspace*{-4mm}
\caption{\label{fig:bulk_evict_insert_bulksize1}Throughput, bulk evict+insert, window size $n=\textrm{4,194,304}$, bulk size $m=\textrm{1}$, varying ooo distance $d$.}
\end{figure*}

\textbf{Figure~\ref{fig:bulk_evict_insert_bulksize1}}
shows the throughput as a function of the out-of-order degree~$d$ when
running with neither bulk evict nor bulk insert, i.e.\ \mbox{with $m=1$}.
As before, this experiment elides algorithms that require in-order data.
%% \pseudocode{amta}, \pseudocode{twostacks\_lite}, and
%% \pseudocode{daba} are elided since they require in-order data.
In the absence of bulk operations, we expect 
\pseudocode{b\_fiba} to have no advantage over
\pseudocode{nb\_fiba}.
In practice, \pseudocode{b\_fiba} does worse on \pseudocode{sum} and
\pseudocode{geomean} but slightly better on \pseudocode{bloom}.

\begin{figure*}
\centerline{%
  \includegraphics[width=.32\textwidth]{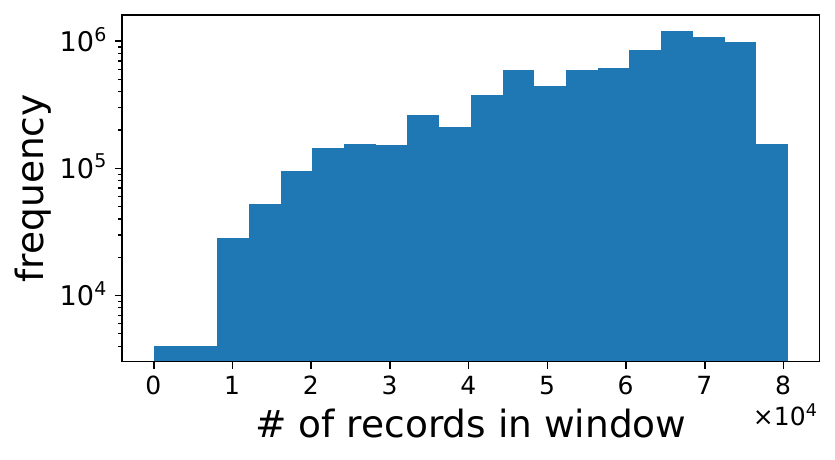}
  \hspace*{3mm}
  \includegraphics[width=.32\textwidth]{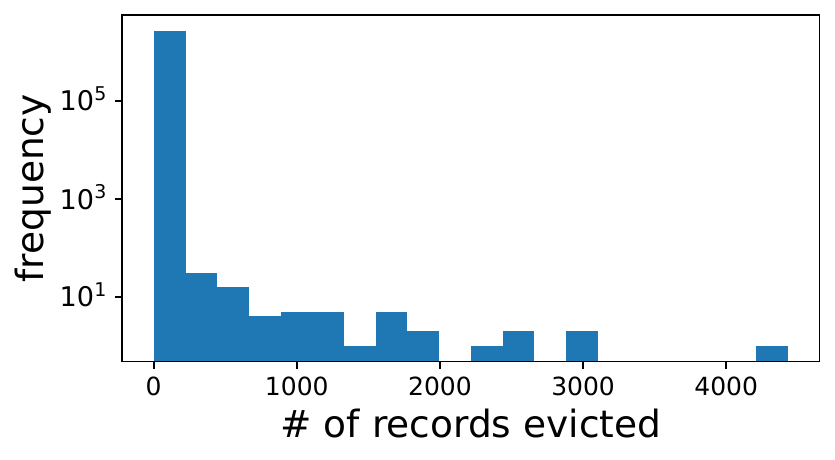}
  \hspace*{3mm}
  \includegraphics[width=.32\textwidth]{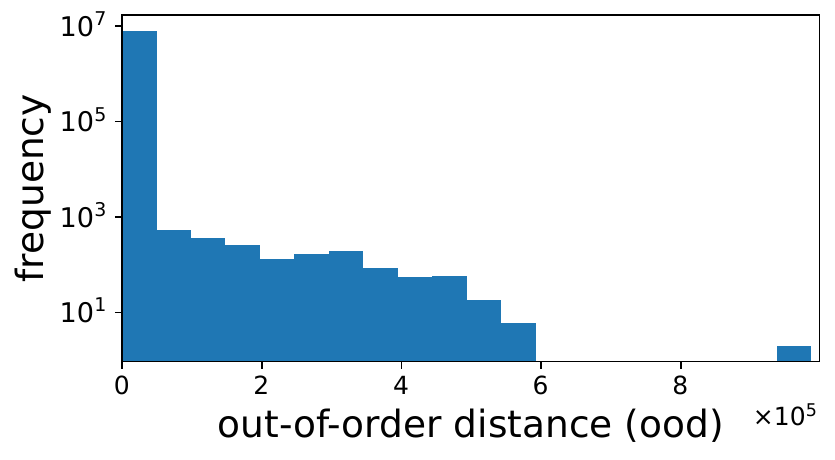}%
}
\vspace*{-4mm}
  \caption{\label{fig:citi_histograms}Histograms of (left) citi bike instantaneous window sizes~$n$, 
  (middle) eviction bulk sizes~$m$ for a time-based window of 1~day, and 
  (right) the out-of-order distance~$d$, i.e., the number of records skipped over by insertions.}
\end{figure*}
\begin{figure*}
\centerline{\includegraphics[width=\textwidth]{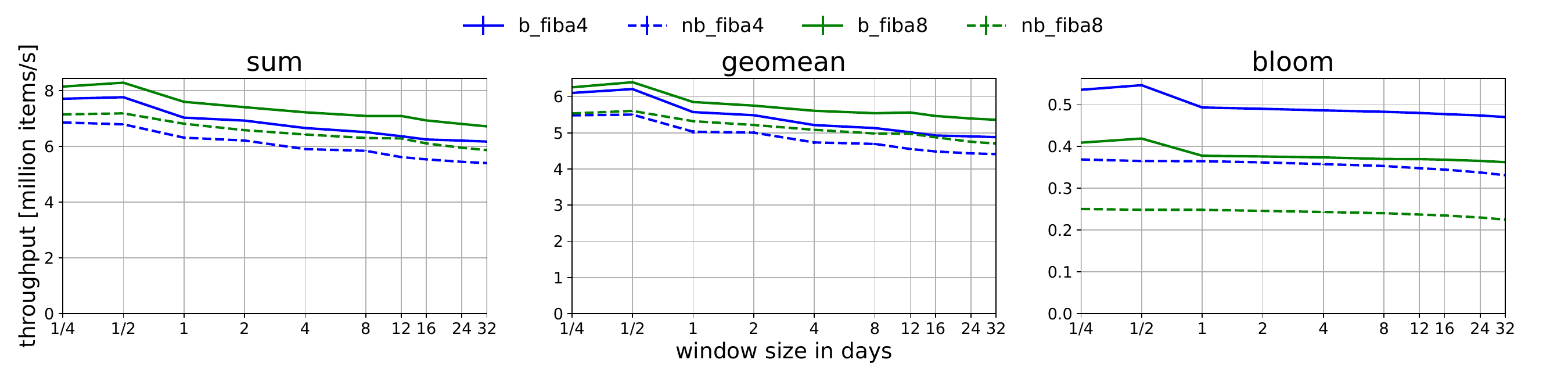}}
\vspace*{-4mm}
\caption{\label{fig:bike}Throughput, citi bike, varying window size $n$, bulk size $m$ and ooo distance $d$ from real data.}
\end{figure*}

\vspace{-2mm}
\subsection{Window Size One Billion}\label{sec:one_billion}
To understand how our algorithm behaves in more extreme scenarios, we ran
\pseudocode{b\string_fiba4} with \pseudocode{geomean}
with a window size of 1~billion ($n = 10^{9}$). 
In theory, FiBA is expected to grow to any window
size and have good cache behaviors, like a B-tree. In practice, this is the case at window
size 1B: The benchmark ran uneventfully using 99\% CPU on average, fully utilizing the one core that it has. Memory
occupancy per window item (i.e., the maximum resident set size for
the process divided by the window size) stays the same ($64-70$
bytes), independent of window size.  
%This small variation is mainly a
%function of how full the tree nodes are.

However, at $n = 1\mbox{B}$, the benchmark has a larger overall memory
footprint, putting more burden on the memory system. This directly manifests as
more frequent cache misses/page faults and indirectly affects the
throughput/latency profile.  While no major page faults
occurred, the number of minor page faults \emph{per} million tuples
processed increased multiple folds (657 at $4\mbox{M}$ vs. 15,287 at
$1\mbox{B}$).
Compared with the 4M-window experiments, the throughput numbers for $n =
1\mbox{B}$ mirror the same trends as the bulk size is varied.  In
absolute numbers, the throughput of $n = 1\mbox{B}$ is $1-1.12\times$ less
than that of $n = 4\mbox{M}$.  For latency, the theory promises 
$\log d$ average (amortized) bulk-evict time, independent of the window size.
With a larger memory footprint, however, we expect a 
slight increase in median latency. The $\log n$ worst-case time should mean the rare spikes
will be noticeably higher with larger window sizes. In practice, we observe
that the median only goes up by $\approx 7.5\%$.  The 99.999-th
percentile markedly increases by around $2\times$.
%% , but such rare events might just happen by chance. 
% the five 9s
%   -- 23966 (cycles) @ 1B
%   -- 11350 (cycles) @ 4M

\subsection{Real Data}\label{sec:results_real}

The previous experiments carefully controlled the variables $n$, $m$,
and~$d$ to explore tradeoffs and validate the theoretical results.
It is also important to see how the algorithms perform on real data.
Specifically, real applications tend to use time-based
windows~(causing both $n$ and $m$ to fluctuate), and real data tends
to be out-of-order~(with varying~$d$).
In other words, all three variables vary within a single run.
\textbf{Figure~\ref{fig:citi_histograms}}
shows this for the NYC Citi Bike dataset~\cite{citi_bike_2022} (Aug--Dec 2018).
The figure shows a histogram of window sizes $n$~(left) and a histogram of
bulk sizes $m$~(middle), assuming a time-based sliding window of 1~day.
Depending on whether that 1~day currently contains more or fewer stream
data items, $n$ ranges broadly, as one would expect for real data
whose event frequencies are uneven.
Similarly, depending on the time\-stamp of the newest inserted window
entry, it can cause a varying number $m$ of the oldest entries to be
evicted.
Most single insertions cause only a single eviction, but there are a
non-negligible number of bulk evicts of hundreds or thousands of
entries.
The figure also shows a histogram of out-of-order distances $d$~(right).
While the vast majority of insertions have a small out-of-order
distance~$d$, there are also hundreds of insertions with $d$ in the
tens of thousands.

\textbf{Figure~\ref{fig:bike}}
shows the throughput results for the Citi Bike dataset on a run that
involves bulk evicts with varying $m$ and single inserts with
varying~$d$.
Since \pseudocode{amta}, \pseudocode{twostacks\_lite}, and
\pseudocode{daba} require in-order data, we cannot use them here.
In theory, we expect the bulk operations to give \pseudocode{b\_fiba}
an advantage over \pseudocode{nb\_fiba}.
In practice, we find that this is indeed the case for real-world data.

\subsection{Java and Apache Flink}\label{sec:results_flink}

\begin{figure}
\centerline{\includegraphics[width=0.32\textwidth]{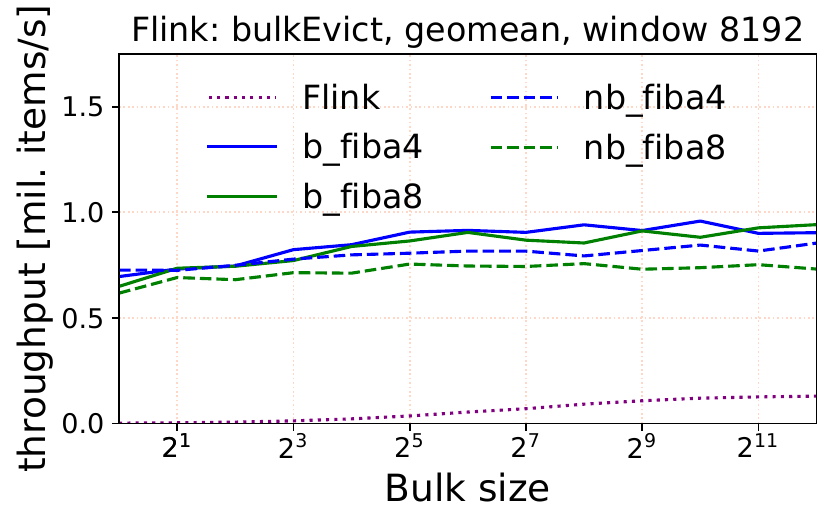}}
\vspace*{-4mm}
\caption{\label{fig:flink_bulkevict_geomean}Throughput, Flink, bulk evict only, window size $n=\textrm{8,192}$, varying bulk size $m$, in-order data $d=\textrm{0}$.}
\vspace*{-4mm}
\end{figure}

To experiment with our algorithm in the context of an end-to-end system, we
reimplemented it in Java inside Apache Flink 1.17~\cite{carbone_et_al_2015}. We
ran experiments that repeatedly perform several single inserts followed by a
bulk evict and query. Using a window of size $n=2^{22} \approx 4\text{M}$, the
FiBA algorithms perform as expected but the Flink baseline was prohibitively
slow, so we report a comparison at $n = 8,192$ instead. At this size, the
trends are already clear. Figure~\ref{fig:flink_bulkevict_geomean} shows that
even without our new bulk eviction support, FiBA is much faster than Flink.
Using bulk evictions further widens that gap. As expected, throughput improves
with increasing bulk size~$m$, consistent with our findings with C++
benchmarks.

\section{Conclusion}\label{sec:conclusion}

This paper describes algorithms for bulk insertions and evictions for
incremental sliding-window aggregation.
Such bulk operations are necessary for real-world data streams, which
tend to be bursty.
Furthermore, real-world data streams tend to have out-of-order data.
Hence, besides handling bulk operations, our algorithms also handle
that case.
Our algorithms are carefully crafted to yield the same algorithmic
complexity as the best prior work for the non-bulk case while
substantially improving over that for the bulk case.\phantom{\cite{tangwongsan_hirzel_schneider_2023}}

%% This paper presents proofs that this is indeed true in theory and
%% empirical experiments showing this is also true in practice.
%% Overall, this paper advances the state of the art for sliding-window
%% aggregation, making it applicable to a broad set of circumstances.

\bibliographystyle{ACM-Reference-Format}
\balance
\bibliography{bibfile}

%%% -*-BibTeX-*-
%%% Do NOT edit. File created by BibTeX with style
%%% ACM-Reference-Format-Journals [18-Jan-2012].

\begin{thebibliography}{29}

%%% ====================================================================
%%% NOTE TO THE USER: you can override these defaults by providing
%%% customized versions of any of these macros before the \bibliography
%%% command.  Each of them MUST provide its own final punctuation,
%%% except for \shownote{}, \showDOI{}, and \showURL{}.  The latter two
%%% do not use final punctuation, in order to avoid confusing it with
%%% the Web address.
%%%
%%% To suppress output of a particular field, define its macro to expand
%%% to an empty string, or better, \unskip, like this:
%%%
%%% \newcommand{\showDOI}[1]{\unskip}   % LaTeX syntax
%%%
%%% \def \showDOI #1{\unskip}           % plain TeX syntax
%%%
%%% ====================================================================

\ifx \showCODEN    \undefined \def \showCODEN     #1{\unskip}     \fi
\ifx \showDOI      \undefined \def \showDOI       #1{#1}\fi
\ifx \showISBNx    \undefined \def \showISBNx     #1{\unskip}     \fi
\ifx \showISBNxiii \undefined \def \showISBNxiii  #1{\unskip}     \fi
\ifx \showISSN     \undefined \def \showISSN      #1{\unskip}     \fi
\ifx \showLCCN     \undefined \def \showLCCN      #1{\unskip}     \fi
\ifx \shownote     \undefined \def \shownote      #1{#1}          \fi
\ifx \showarticletitle \undefined \def \showarticletitle #1{#1}   \fi
\ifx \showURL      \undefined \def \showURL       {\relax}        \fi
% The following commands are used for tagged output and should be
% invisible to TeX
\providecommand\bibfield[2]{#2}
\providecommand\bibinfo[2]{#2}
\providecommand\natexlab[1]{#1}
\providecommand\showeprint[2][]{arXiv:#2}

\bibitem[\protect\citeauthoryear{??}{cit}{2022}]%
        {citi_bike_2022}
 \bibinfo{year}{2022}\natexlab{}.
\newblock \bibinfo{title}{{Citi Bike System Data}}.
\newblock \bibinfo{howpublished}{https://www.citibikenyc.com/system-data}.
\newblock
\newblock
\shownote{Retrieved December, 2022.}


\bibitem[\protect\citeauthoryear{Agarwal, Cormode, Huang, Phillips, Wei, and
  Yi}{Agarwal et~al\mbox{.}}{2012}]%
        {agarwal_et_al_2012}
\bibfield{author}{\bibinfo{person}{Pankaj~K. Agarwal}, \bibinfo{person}{Graham
  Cormode}, \bibinfo{person}{Zengfeng Huang}, \bibinfo{person}{Jeff Phillips},
  \bibinfo{person}{Zhewei Wei}, {and} \bibinfo{person}{Ke Yi}.}
  \bibinfo{year}{2012}\natexlab{}.
\newblock \showarticletitle{Mergeable Summaries}. In
  \bibinfo{booktitle}{\emph{Symposium on Principles of Database Systems
  (PODS)}}. \bibinfo{pages}{23--34}.
\newblock
\urldef\tempurl%
\url{http://doi.acm.org/10.1145/2213556.2213562}
\showURL{%
\tempurl}


\bibitem[\protect\citeauthoryear{Akidau, Balikov, Bekiroglu, Chernyak,
  Haberman, Lax, McVeety, Mills, Nordstrom, and Whittle}{Akidau
  et~al\mbox{.}}{2013}]%
        {akidau_et_al_2013}
\bibfield{author}{\bibinfo{person}{Tyler Akidau}, \bibinfo{person}{Alex
  Balikov}, \bibinfo{person}{Kaya Bekiroglu}, \bibinfo{person}{Slava Chernyak},
  \bibinfo{person}{Josh Haberman}, \bibinfo{person}{Reuven Lax},
  \bibinfo{person}{Sam McVeety}, \bibinfo{person}{Daniel Mills},
  \bibinfo{person}{Paul Nordstrom}, {and} \bibinfo{person}{Sam Whittle}.}
  \bibinfo{year}{2013}\natexlab{}.
\newblock \showarticletitle{{MillWheel}: Fault-Tolerant Stream Processing at
  Internet Scale}. In \bibinfo{booktitle}{\emph{Conference on Very Large Data
  Bases (VLDB) Industrial Track}}. \bibinfo{pages}{734--746}.
\newblock
\urldef\tempurl%
\url{https://doi.org/10.14778/2536222.2536229}
\showURL{%
\tempurl}


\bibitem[\protect\citeauthoryear{Bifet and Gavald{\`a}}{Bifet and
  Gavald{\`a}}{2007}]%
        {bifet_gavalda_2007}
\bibfield{author}{\bibinfo{person}{Albert Bifet} {and} \bibinfo{person}{Ricard
  Gavald{\`a}}.} \bibinfo{year}{2007}\natexlab{}.
\newblock \showarticletitle{Learning from Time-Changing Data with Adaptive
  Windowing}. In \bibinfo{booktitle}{\emph{International Conference on Data
  Mining (ICDM)}}. \bibinfo{pages}{443--448}.
\newblock
\urldef\tempurl%
\url{https://doi.org/10.1137/1.9781611972771.42}
\showURL{%
\tempurl}


\bibitem[\protect\citeauthoryear{Bloom}{Bloom}{1970}]%
        {bloom_1970}
\bibfield{author}{\bibinfo{person}{Burton~H. Bloom}.}
  \bibinfo{year}{1970}\natexlab{}.
\newblock \showarticletitle{Space/Time Trade-offs in Hash Coding with Allowable
  Errors}.
\newblock \bibinfo{journal}{\emph{Communications of the ACM (CACM)}}
  \bibinfo{volume}{13}, \bibinfo{number}{7} (\bibinfo{year}{1970}),
  \bibinfo{pages}{422--426}.
\newblock
\urldef\tempurl%
\url{https://doi.org/10.1145/362686.362692}
\showURL{%
\tempurl}


\bibitem[\protect\citeauthoryear{Bou, Kitagawa, and Amagasa}{Bou
  et~al\mbox{.}}{2021}]%
        {bou_kitagawa_amagasa_2021}
\bibfield{author}{\bibinfo{person}{Savong Bou}, \bibinfo{person}{Hiroyuki
  Kitagawa}, {and} \bibinfo{person}{Toshiyuki Amagasa}.}
  \bibinfo{year}{2021}\natexlab{}.
\newblock \showarticletitle{{CPiX}: Real-Time Analytics Over Out-of-Order Data
  Streams By Incremental Sliding-Window Aggregation}.
\newblock \bibinfo{journal}{\emph{Transactions on Knowledge and Data
  Engineering (TKDE)}}  \bibinfo{volume}{Early Access version of 28 January
  2021} (\bibinfo{year}{2021}).
\newblock
\urldef\tempurl%
\url{https://doi.org/10.1109/TKDE.2021.3054898}
\showURL{%
\tempurl}


\bibitem[\protect\citeauthoryear{Bouillet, Kothari, Kumar, Mignet, Nathan,
  Ranganathan, Turaga, Udrea, and Verscheure}{Bouillet et~al\mbox{.}}{2012}]%
        {bouillet_et_al_2012}
\bibfield{author}{\bibinfo{person}{Eric Bouillet}, \bibinfo{person}{Ravi
  Kothari}, \bibinfo{person}{Vibhore Kumar}, \bibinfo{person}{Laurent Mignet},
  \bibinfo{person}{Senthil Nathan}, \bibinfo{person}{Anand Ranganathan},
  \bibinfo{person}{Deepak~S. Turaga}, \bibinfo{person}{Octavian Udrea}, {and}
  \bibinfo{person}{Olivier Verscheure}.} \bibinfo{year}{2012}\natexlab{}.
\newblock \showarticletitle{Processing 6 billion {CDRs}/day: from research to
  production (experience report)}. In \bibinfo{booktitle}{\emph{Conference on
  Distributed Event-Based Systems (DEBS)}}. \bibinfo{pages}{264--267}.
\newblock
\urldef\tempurl%
\url{https://doi.org/10.1145/2335484.2335513}
\showURL{%
\tempurl}


\bibitem[\protect\citeauthoryear{Brown and Tarjan}{Brown and Tarjan}{1979}]%
        {brown_tarjan_1979}
\bibfield{author}{\bibinfo{person}{Mark~R. Brown} {and}
  \bibinfo{person}{Robert~E. Tarjan}.} \bibinfo{year}{1979}\natexlab{}.
\newblock \showarticletitle{A Fast Merging Algorithm}.
\newblock \bibinfo{journal}{\emph{Journal of the ACM (JACM)}}
  \bibinfo{volume}{26}, \bibinfo{number}{2} (\bibinfo{date}{April}
  \bibinfo{year}{1979}), \bibinfo{pages}{211--226}.
\newblock
\urldef\tempurl%
\url{https://doi.org/10.1145/322123.322127}
\showURL{%
\tempurl}


\bibitem[\protect\citeauthoryear{Carbone, Katsifodimos, Ewen, Markl, Haridi,
  and Tzoumas}{Carbone et~al\mbox{.}}{2015}]%
        {carbone_et_al_2015}
\bibfield{author}{\bibinfo{person}{Paris Carbone}, \bibinfo{person}{Asterios
  Katsifodimos}, \bibinfo{person}{Stephan Ewen}, \bibinfo{person}{Volker
  Markl}, \bibinfo{person}{Seif Haridi}, {and} \bibinfo{person}{Kostas
  Tzoumas}.} \bibinfo{year}{2015}\natexlab{}.
\newblock \showarticletitle{Apache {Flink}: Stream and Batch Processing in a
  Single Engine}.
\newblock \bibinfo{journal}{\emph{Bulletin of the IEEE Computer Society
  Technical Committee on Data Engineering}} \bibinfo{volume}{36},
  \bibinfo{number}{4} (\bibinfo{year}{2015}), \bibinfo{pages}{28--38}.
\newblock
\urldef\tempurl%
\url{http://sites.computer.org/debull/A15dec/p28.pdf}
\showURL{%
\tempurl}


\bibitem[\protect\citeauthoryear{Cormen, Leiserson, and Rivest}{Cormen
  et~al\mbox{.}}{1990}]%
        {cormen_leiserson_rivest_1990}
\bibfield{author}{\bibinfo{person}{Thomas Cormen}, \bibinfo{person}{Charles
  Leiserson}, {and} \bibinfo{person}{Ronald Rivest}.}
  \bibinfo{year}{1990}\natexlab{}.
\newblock \bibinfo{booktitle}{\emph{Introduction to Algorithms}}.
\newblock \bibinfo{publisher}{MIT Press}.
\newblock


\bibitem[\protect\citeauthoryear{Hinze and Paterson}{Hinze and
  Paterson}{2006}]%
        {hinze_paterson_2006}
\bibfield{author}{\bibinfo{person}{Ralf Hinze} {and} \bibinfo{person}{Ross
  Paterson}.} \bibinfo{year}{2006}\natexlab{}.
\newblock \showarticletitle{Finger trees: a simple general-purpose data
  structure}.
\newblock \bibinfo{journal}{\emph{Journal of Functional Programming (JFP)}}
  \bibinfo{volume}{16}, \bibinfo{number}{2} (\bibinfo{year}{2006}),
  \bibinfo{pages}{197--217}.
\newblock
\urldef\tempurl%
\url{https://doi.org/10.1017/S0956796805005769}
\showURL{%
\tempurl}


\bibitem[\protect\citeauthoryear{Hirzel, Schneider, and Gedik}{Hirzel
  et~al\mbox{.}}{2017}]%
        {hirzel_schneider_gedik_2017}
\bibfield{author}{\bibinfo{person}{Martin Hirzel}, \bibinfo{person}{Scott
  Schneider}, {and} \bibinfo{person}{Bu\u{g}ra Gedik}.}
  \bibinfo{year}{2017}\natexlab{}.
\newblock \showarticletitle{{SPL}: An Extensible Language for Distributed
  Stream Processing}.
\newblock \bibinfo{journal}{\emph{Transactions on Programming Languages and
  Systems (TOPLAS)}} \bibinfo{volume}{39}, \bibinfo{number}{1}
  (\bibinfo{date}{March} \bibinfo{year}{2017}), \bibinfo{pages}{5:1--5:39}.
\newblock
\urldef\tempurl%
\url{https://doi.org/10.1145/3039207}
\showURL{%
\tempurl}


\bibitem[\protect\citeauthoryear{Izbicki}{Izbicki}{2013}]%
        {izbicki_2013}
\bibfield{author}{\bibinfo{person}{Michael Izbicki}.}
  \bibinfo{year}{2013}\natexlab{}.
\newblock \showarticletitle{Algebraic Classifiers: A Generic Approach to Fast
  Cross-Validation, Online Training, and Parallel Training}. In
  \bibinfo{booktitle}{\emph{International Conference on Machine Learning
  (ICML)}}. \bibinfo{pages}{648--656}.
\newblock
\urldef\tempurl%
\url{http://proceedings.mlr.press/v28/izbicki13.html}
\showURL{%
\tempurl}


\bibitem[\protect\citeauthoryear{Kaplan and Tarjan}{Kaplan and Tarjan}{1995}]%
        {kaplan_tarjan_1995}
\bibfield{author}{\bibinfo{person}{Haim Kaplan} {and}
  \bibinfo{person}{Robert~E. Tarjan}.} \bibinfo{year}{1995}\natexlab{}.
\newblock \showarticletitle{Persistent Lists with Catenation via Recursive
  Slow-down}. In \bibinfo{booktitle}{\emph{Symposium on the Theory of Computing
  (STOC)}}. \bibinfo{pages}{93--102}.
\newblock
\urldef\tempurl%
\url{https://doi.org/10.1145/225058.225090}
\showURL{%
\tempurl}


\bibitem[\protect\citeauthoryear{Krishnamurthy, Franklin, Davis, Farina,
  Golovko, Li, and Thombre}{Krishnamurthy et~al\mbox{.}}{2010}]%
        {krishnamurthy_et_al_2010}
\bibfield{author}{\bibinfo{person}{Sailesh Krishnamurthy},
  \bibinfo{person}{Michael~J. Franklin}, \bibinfo{person}{Jeffrey Davis},
  \bibinfo{person}{Daniel Farina}, \bibinfo{person}{Pasha Golovko},
  \bibinfo{person}{Alan Li}, {and} \bibinfo{person}{Neil Thombre}.}
  \bibinfo{year}{2010}\natexlab{}.
\newblock \showarticletitle{Continuous Analytics over Discontinuous Streams}.
  In \bibinfo{booktitle}{\emph{International Conference on Management of Data
  (SIGMOD)}}. \bibinfo{pages}{1081--1092}.
\newblock
\urldef\tempurl%
\url{https://doi.org/10.1145/1807167.1807290}
\showURL{%
\tempurl}


\bibitem[\protect\citeauthoryear{Leijen, Zorn, and {de Moura}}{Leijen
  et~al\mbox{.}}{2019}]%
        {leijen_zorn_demoura_2019}
\bibfield{author}{\bibinfo{person}{Daan Leijen}, \bibinfo{person}{Benjamin
  Zorn}, {and} \bibinfo{person}{Leonardo {de Moura}}.}
  \bibinfo{year}{2019}\natexlab{}.
\newblock \showarticletitle{Mimalloc: Free List Sharding in Action}. In
  \bibinfo{booktitle}{\emph{Asian Symposium on Programming Languages and
  Systems (APLAS)}}. \bibinfo{pages}{244--265}.
\newblock
\urldef\tempurl%
\url{https://doi.org/10.1007/978-3-030-34175-6_13}
\showURL{%
\tempurl}


\bibitem[\protect\citeauthoryear{Li, Tufte, Shkapenyuk, Papadimos, Johnson, and
  Maier}{Li et~al\mbox{.}}{2008}]%
        {li_et_al_2008}
\bibfield{author}{\bibinfo{person}{Jin Li}, \bibinfo{person}{Kristin Tufte},
  \bibinfo{person}{Vladislav Shkapenyuk}, \bibinfo{person}{Vassilis Papadimos},
  \bibinfo{person}{Theodore Johnson}, {and} \bibinfo{person}{David Maier}.}
  \bibinfo{year}{2008}\natexlab{}.
\newblock \showarticletitle{Out-of-Order Processing: A New Architecture for
  High-performance Stream Systems}. In \bibinfo{booktitle}{\emph{Conference on
  Very Large Data Bases (VLDB)}}. \bibinfo{pages}{274--288}.
\newblock
\urldef\tempurl%
\url{https://doi.org/10.14778/1453856.1453890}
\showURL{%
\tempurl}


\bibitem[\protect\citeauthoryear{Michalke, Grulich, Lutz, Zeuch, and
  Markl}{Michalke et~al\mbox{.}}{2021}]%
        {michalke_et_al_2021}
\bibfield{author}{\bibinfo{person}{Adrian Michalke},
  \bibinfo{person}{Philipp~M. Grulich}, \bibinfo{person}{Clemens Lutz},
  \bibinfo{person}{Steffen Zeuch}, {and} \bibinfo{person}{Volker Markl}.}
  \bibinfo{year}{2021}\natexlab{}.
\newblock \showarticletitle{An Energy-Efficient Stream Join for the Internet of
  Things}. In \bibinfo{booktitle}{\emph{Workshop on Data Management on New
  Hardware (DaMoN)}}.
\newblock
\urldef\tempurl%
\url{https://doi.org/10.1145/3465998.3466005}
\showURL{%
\tempurl}


\bibitem[\protect\citeauthoryear{Poppe, Lei, Ma, Rozet, and
  Rundensteiner}{Poppe et~al\mbox{.}}{2021}]%
        {poppe_et_al_2021}
\bibfield{author}{\bibinfo{person}{Olga Poppe}, \bibinfo{person}{Chuan Lei},
  \bibinfo{person}{Lei Ma}, \bibinfo{person}{Allison Rozet}, {and}
  \bibinfo{person}{Elke~A. Rundensteiner}.} \bibinfo{year}{2021}\natexlab{}.
\newblock \showarticletitle{To Share, or Not to Share Online Event Trend
  Aggregation Over Bursty Event Streams}. In
  \bibinfo{booktitle}{\emph{International Conference on Management of Data
  (SIGMOD)}}. \bibinfo{pages}{1452–1464}.
\newblock
\urldef\tempurl%
\url{https://doi.org/10.1145/3448016.3452785}
\showURL{%
\tempurl}


\bibitem[\protect\citeauthoryear{Seidemann, Glombiewski, K{\"o}rber, and
  Seeger}{Seidemann et~al\mbox{.}}{2019}]%
        {seidemann_et_al_2019}
\bibfield{author}{\bibinfo{person}{Marc Seidemann}, \bibinfo{person}{Nikolaus
  Glombiewski}, \bibinfo{person}{Michael K{\"o}rber}, {and}
  \bibinfo{person}{Bernhard Seeger}.} \bibinfo{year}{2019}\natexlab{}.
\newblock \showarticletitle{{ChronicleDB}: A High-Performance Event Store}.
\newblock \bibinfo{journal}{\emph{Transactions on Database Systems (TODS)}}
  \bibinfo{volume}{44}, \bibinfo{number}{4} (\bibinfo{date}{Oct.}
  \bibinfo{year}{2019}).
\newblock
\urldef\tempurl%
\url{https://doi.org/10.1145/3342357}
\showURL{%
\tempurl}


\bibitem[\protect\citeauthoryear{Shein, Chrysanthis, and Labrinidis}{Shein
  et~al\mbox{.}}{2017}]%
        {shein_chrysanthis_labrinidis_2017}
\bibfield{author}{\bibinfo{person}{Anatoli~U. Shein}, \bibinfo{person}{Panos~K.
  Chrysanthis}, {and} \bibinfo{person}{Alexandros Labrinidis}.}
  \bibinfo{year}{2017}\natexlab{}.
\newblock \showarticletitle{{FlatFIT}: Accelerated Incremental Sliding-Window
  Aggregation for Real-Time Analytics}. In \bibinfo{booktitle}{\emph{Conference
  on Scientific and Statistical Database Management (SSDBM)}}.
  \bibinfo{pages}{5.1--5.12}.
\newblock
\urldef\tempurl%
\url{https://doi.org/10.1145/3085504.3085509}
\showURL{%
\tempurl}


\bibitem[\protect\citeauthoryear{Tangwongsan, Hirzel, and
  Schneider}{Tangwongsan et~al\mbox{.}}{2019}]%
        {tangwongsan_hirzel_schneider_2019}
\bibfield{author}{\bibinfo{person}{Kanat Tangwongsan}, \bibinfo{person}{Martin
  Hirzel}, {and} \bibinfo{person}{Scott Schneider}.}
  \bibinfo{year}{2019}\natexlab{}.
\newblock \showarticletitle{Optimal and General Out-of-Order Sliding-Window
  Aggregation}. In \bibinfo{booktitle}{\emph{Conference on Very Large Data
  Bases (VLDB)}}. \bibinfo{pages}{1167--1180}.
\newblock
\urldef\tempurl%
\url{http://www.vldb.org/pvldb/vol12/p1167-tangwongsan.pdf}
\showURL{%
\tempurl}


\bibitem[\protect\citeauthoryear{Tangwongsan, Hirzel, and
  Schneider}{Tangwongsan et~al\mbox{.}}{2021}]%
        {tangwongsan_hirzel_schneider_2021}
\bibfield{author}{\bibinfo{person}{Kanat Tangwongsan}, \bibinfo{person}{Martin
  Hirzel}, {and} \bibinfo{person}{Scott Schneider}.}
  \bibinfo{year}{2021}\natexlab{}.
\newblock \showarticletitle{In-Order Sliding-Window Aggregation in Worst-Case
  Constant Time}.
\newblock \bibinfo{journal}{\emph{Journal on Very Large Data Bases (VLDB J.)}}
  \bibinfo{volume}{30} (\bibinfo{date}{June} \bibinfo{year}{2021}),
  \bibinfo{pages}{933--957}.
\newblock


\bibitem[\protect\citeauthoryear{Tangwongsan, Hirzel, and
  Schneider}{Tangwongsan et~al\mbox{.}}{2023}]%
        {tangwongsan_hirzel_schneider_2023}
\bibfield{author}{\bibinfo{person}{Kanat Tangwongsan}, \bibinfo{person}{Martin
  Hirzel}, {and} \bibinfo{person}{Scott Schneider}.}
  \bibinfo{year}{2023}\natexlab{}.
\newblock \bibinfo{title}{Out-of-Order Sliding-Window Aggregation with
  Efficient Bulk Evictions and Insertions (Extended Version)}.
\newblock
\newblock
\newblock
\shownote{\ifVldbElse{arXiv/?}{[this document]}.}


\bibitem[\protect\citeauthoryear{Theodorakis, Koliousis, Pietzuch, and
  Pirk}{Theodorakis et~al\mbox{.}}{2018}]%
        {theodorakis_et_al_2018}
\bibfield{author}{\bibinfo{person}{Georgios Theodorakis},
  \bibinfo{person}{Alexandros Koliousis}, \bibinfo{person}{Peter~R. Pietzuch},
  {and} \bibinfo{person}{Holger Pirk}.} \bibinfo{year}{2018}\natexlab{}.
\newblock \showarticletitle{Hammer {Slide}: Work- and {CPU}-efficient Streaming
  Window Aggregation}. In \bibinfo{booktitle}{\emph{Workshop on Accelerating
  Analytics and Data Management Systems Using Modern Processor and Storage
  Architectures (ADMS)}}. \bibinfo{pages}{34--41}.
\newblock
\urldef\tempurl%
\url{http://adms-conf.org/2018-camera-ready/SIMDWindowPaper_ADMS'18.pdf}
\showURL{%
\tempurl}


\bibitem[\protect\citeauthoryear{Theodorakis, Koliousis, Pietzuch, and
  Pirk}{Theodorakis et~al\mbox{.}}{2020a}]%
        {theodorakis_et_al_2020}
\bibfield{author}{\bibinfo{person}{Georgios Theodorakis},
  \bibinfo{person}{Alexandros Koliousis}, \bibinfo{person}{Peter~R. Pietzuch},
  {and} \bibinfo{person}{Holger Pirk}.} \bibinfo{year}{2020}\natexlab{a}.
\newblock \showarticletitle{{LightSaber}: Efficient Window Aggregation on
  Multi-core Processors}. In \bibinfo{booktitle}{\emph{International Conference
  on Management of Data (SIGMOD)}}. \bibinfo{pages}{2,505--2,521}.
\newblock
\urldef\tempurl%
\url{https://dl.acm.org/doi/10.1145/3318464.3389753}
\showURL{%
\tempurl}


\bibitem[\protect\citeauthoryear{Theodorakis, Pietzuch, and Pirk}{Theodorakis
  et~al\mbox{.}}{2020b}]%
        {theodorakis_pietzuch_pirk_2020}
\bibfield{author}{\bibinfo{person}{Georgios Theodorakis},
  \bibinfo{person}{Peter~R. Pietzuch}, {and} \bibinfo{person}{Holger Pirk}.}
  \bibinfo{year}{2020}\natexlab{b}.
\newblock \showarticletitle{{SlideSide}: A fast Incremental Stream Processing
  Algorithm for Multiple Queries}. In \bibinfo{booktitle}{\emph{Conference on
  Extending Database Technology (EDBT)}}. \bibinfo{pages}{435--438}.
\newblock
\urldef\tempurl%
\url{https://openproceedings.org/2020/conf/edbt/paper_337.pdf}
\showURL{%
\tempurl}


\bibitem[\protect\citeauthoryear{Traub, Grulich, Cuellar, Bre{\"s},
  Katsifodimos, Rabl, and Markl}{Traub et~al\mbox{.}}{2019}]%
        {traub_et_al_2019}
\bibfield{author}{\bibinfo{person}{Jonas Traub}, \bibinfo{person}{Philipp
  Grulich}, \bibinfo{person}{Alejandro~Rodriguez Cuellar},
  \bibinfo{person}{Sebastian Bre{\"s}}, \bibinfo{person}{Asterios
  Katsifodimos}, \bibinfo{person}{Tilmann Rabl}, {and} \bibinfo{person}{Volker
  Markl}.} \bibinfo{year}{2019}\natexlab{}.
\newblock \showarticletitle{Efficient Window Aggregation with General Stream
  Slicing}. In \bibinfo{booktitle}{\emph{Conference on Extending Database
  Technology (EDBT)}}. \bibinfo{pages}{97--108}.
\newblock
\urldef\tempurl%
\url{https://doi.org/10.5441/002/edbt.2019.10}
\showURL{%
\tempurl}


\bibitem[\protect\citeauthoryear{Villalba, Berral, and Carrera}{Villalba
  et~al\mbox{.}}{2019}]%
        {villalba_berral_carrera_2019}
\bibfield{author}{\bibinfo{person}{Alvaro Villalba},
  \bibinfo{person}{Josep~Lluis Berral}, {and} \bibinfo{person}{David Carrera}.}
  \bibinfo{year}{2019}\natexlab{}.
\newblock \showarticletitle{Constant-Time Sliding Window Framework with Reduced
  Memory Footprint and Efficient Bulk Evictions}.
\newblock \bibinfo{journal}{\emph{Transactions on Parallel and Distributed
  Systems (TPDS)}} \bibinfo{volume}{30}, \bibinfo{number}{3}
  (\bibinfo{date}{May} \bibinfo{year}{2019}), \bibinfo{pages}{486--500}.
\newblock
\urldef\tempurl%
\url{https://doi.org/10.1109/TPDS.2018.2868960}
\showURL{%
\tempurl}


\end{thebibliography}

\ifVldbElse{%
}{%
\newpage
\appendix
\section{Proofs}\label{sec:proofs}
% We flesh out what was deferred from earlier sections.

\begin{proof}[Proof~of~Claim~\ref{claim:well-splittable}]
   Let $k = \left\lfloor \frac{p}{\mu + 1} \right\rfloor$. Then, $p = k(\mu + 1)
   + r$, where $0 \leq r \leq \mu$. There are two cases to consider:
   If $r = \mu$, set $t = k$ and $\beta_t = \mu$. 
    Otherwise, $r \leq \mu - 1$, set $t = k-1$ and
         $\beta_t = \mu + 1 + k \leq 2\mu$---i.e., the remainder is too small
         to be a valid node, so we combine them with the rightmost arity-$(\mu
         + 1)$ node thus far.  Notice that $t \geq 2$ since $p \geq 2\mu + 2$. 
   %\end{itemize}
\end{proof}

\newcommand*{\minarity}{\mu}
%\begin{proof}[Proof~of~Lemma~\ref{lem:fiba-bulk-restructuring}]
\noindent{}\textbf{Further Discussion of Lemma~\ref{lem:fiba-bulk-restructuring}.}
The amortized bound can be analyzed analogously to Lemma~9 in the FiBA
paper~\cite{tangwongsan_hirzel_schneider_2019}, arguing that charging $2$ coins
per new entry is sufficient in maintaining the tree. Below, we describe the
extension from that proof that pertains to \pseudocode{bulkInsert}.

For a node $w$, we expect a coin reserve of 
\[\small \textit{coins}(w)=\left\{\begin{array}{ll}
    2 + 2k & \mbox{if $a=2\minarity+k$ and $k \geq 1$}\\
    2 & \mbox{if $a=2\minarity$ or ($a=\minarity-1$ and $w$ is not the root)}\\
    1 & \mbox{if $a=\minarity$ and $w$ is not the root}\\
    0 & \mbox{if $a<2\minarity$ and ($a>\minarity$ or $w$ is the root)}
  \end{array}\right.,\]
which naturally extends the \textit{coins} function for arity $a \geq 2\mu + 1$.

At the start of \pseudocode{bulkInsert}, we charge each entry $2$ coins,
sufficient at the insertion sites. As restructuring happens, we reason about
each affected node $w$ as follows: Let $t = \left\lfloor a/(\mu + 1)
\right\rfloor$, so $a = t(\mu + 1) + r$, where $0 \leq r \leq \mu$. There
are two cases:
\begin{itemize}[leftmargin=*]
   \item $r = \mu$: Node $w$ yields $t-1$ new arity-$(\mu+1)$ nodes and one
      arity-$\mu$ node, needing $3(t-1) + 4 = 3t + 1$ coins to send up, pay for the
      split, and keep on the last node. But for $t \geq 2$, the reserve on $w$
      has $\rho := 2(t(\mu+1) + \mu - 2\mu + 1) = 2(t-1)\mu + 2t + 2 \geq
      2\frac{t}2\mu + 2t + 2 \geq 4t + 2$ as $\mu \geq 2$. Also, when $t = 1$,
      $\rho = 4 \geq 3t + 1$.
      
   \item $r \leq \mu - 1$: Node $w$ yields $t-2$ new arity-$(\mu+1)$ nodes
      and one arity $(\mu + 1 + r)$ node, needing at most $3(t-2) + 5 = 3t -
      1$ to send up, pay for the split, and keep on the last node (if $r =
      \mu-1$). Now we know $t \geq 2$. For $t \geq 3$, the reserve $\rho :=
      2(t(\mu+1)+r - 2\mu + 1) \geq 2(t-2)\mu + 2t \geq 2\frac{t}4\mu + 2t
      \geq 3t$ since $\mu \geq 2$.  For $t = 2$, $\rho = 6 + 2r \geq 3t - 1$.
\end{itemize}
Either way, the reserve coins can cover the restructuring cost.

\section{Pseudocode and Examples}\label{sec:examples}

Figures \ref{fig:move_batch_example},
\ref{fig:merge_notsibling_example},
and~\ref{fig:make_child_root_example} provide pseudocode and more
concrete examples for some of the core operations of our algorithm.
To reduce clutter, the figures illustrate the tree structure with
only timestamps.
All three of these figures assume $\pseudocode{MIN\_ARITY}=2$, so each
non-leaf node has between 2 and~4 children and each node has between 1
and~3 entries.

\begin{figure*}
  \begin{center}\begin{tabular}{c}\begin{lstlisting}[language=MyPseudoCode]
fun moveBatch(node: Node, neighbor: Node, ancestor: Node, k: Int)
   a $\gets$ max i $\in$ 0, ..., ancestor.arity - 2 if ancestor.getTime(i) < neighbor.getTime(0)
   if node.isLeaf()
      node.pushBackEntry(ancestor.getTime(a), ancestor.getValue(a))
      for i $\in$ 0, ..., k - 2
         node.pushBackEntry(neighbor.getTime(i), neighbor.getValue(i))
   else
      node.pushBack(ancestor.getTime(a), ancestor.getValue(a), neighbor.getChild(0))
      for i $\in$ 0, ..., k - 2
         node.pushBack(neighbor.getTime(i), neighbor.getValue(i), neighbor.getChild(i + 1))
   ancestor.setEntry(a, neighbor.getTime(k - 1), neighbor.getValue(k - 1))
   neighbor.popFront(k)\end{lstlisting}\end{tabular}\end{center}
  \vspace*{1mm}
  \centerline{\textbf{(a) Pseudocode}}
  \vspace*{3mm}
  \centerline{\includegraphics[scale=1.333]{move_batch_example.pdf}}
  \vspace*{1mm}
  \centerline{\textbf{(b) Concrete example}}
  \caption{\label{fig:move_batch_example}Pseudocode and concrete example
    illustrating moving a batch, c.f.\ Figure~\ref{fig:move_batch_general}.
    \rm
    The example shows a \pseudocode{bulkEvict(24)}, i.e., evicting
    everything up to $t\le24$.
    Specifically, it shows the second iteration of the eviction loop,
    which tackles the next layer of nodes immediately above the leaves.
    The local eviction pops the first two children and entries from
    \pseudocode{node}, leaving a degenerate node remnant with no
    entries and a single child.
    Since \pseudocode{neighbor} has sufficient surplus, to
    re-establish the arity invariant, the algorithm next moves $k=1$
    child and entry to \pseudocode{node}.
    This involves rotating $e_a$ with timestamp~26 from
    \pseudocode{ancestor} to \pseudocode{node};
    moving the first child from \pseudocode{neighbor} to \pseudocode{node};
    and rotating $e_{k-1}$ with timestamp~29 from
    \pseudocode{neighbor} to \pseudocode{ancestor}.
    After this operation, all arities are repaired up to the current level,
    and the next iteration will resume the eviction loop a level higher.}
\end{figure*}

\begin{figure*}
  \begin{center}\begin{tabular}{c}\begin{lstlisting}[language=MyPseudoCode]
fun mergeNotSibling(node: Node, neighbor: Node, ancestor: Node):
   a $\gets$ max i $\in$ 0, ..., ancestor.arity - 2 if ancestor.getTime(i) < neighbor.getTime(0)
   if node.isLeaf()
      neighbor.pushFrontEntry(ancestor.getTime(a), ancestor.getValue(a))
      for i $\in$ node.arity - 2, ..., 0
         neighbor.pushFrontEntry(node.getTime(i), node.getValue(i))
   else
      neighbor.pushFront(node.getChild(node.arity - 1), ancestor.getTime(a), ancestor.getValue(a))
      for i $\in$ node.arity - 2, ..., 0
         neighbor.pushFront(node.getChild(i), node.getTime(i), node.getValue(i))
   ancestor.popFront(a + 1)\end{lstlisting}\end{tabular}\end{center}
  \vspace*{1mm}
  \centerline{\textbf{(a) Pseudocode}}
  \vspace*{3mm}
  \centerline{\includegraphics[scale=1.333]{merge_notsibling_example.pdf}}
  \vspace*{1mm}
  \centerline{\textbf{(b) Concrete example}}
  \caption{\label{fig:merge_notsibling_example}Pseudocode and concrete
    example illustrating merging with a non-sibling neighbor,
    c.f.\ Figure~\ref{fig:merge_notsibling_general}.
    \rm
    The example shows a \pseudocode{bulkEvict(25)}, i.e., evicting
    everything up to $t\le25$.
    Specifically, it shows the second iteration of the eviction loop,
    which tackles the next layer of nodes immediately above the leaves.
    The local eviction pops the first two children and entries from
    \pseudocode{node}, leaving a degenerate node remnant with no
    entries and a single child.
    Since \pseudocode{neighbor} has insufficient surplus, to
    re-establish the arity invariant, the algorithm next merges the
    node into its neighbor.
    Specifically, it moves the sole remaining child from
    \pseudocode{node} to \pseudocode{neighbor}, and it pops the first
    entry $e_a$ with timestamp~28 from \pseudocode{ancestor} and adds
    it to \pseudocode{neighbor}.   
    After this operation, all arities are repaired up to the current
    level, and there is no node for the algorithm to work on at the
    next level, so the eviction loop resumes at the ancestor's layer.}
\end{figure*}

\begin{figure*}
  \begin{center}\begin{tabular}{c}\begin{lstlisting}[language=MyPseudoCode]
if neighbor $=$ $\bot$
   if node.arity $=$ 1 and not node.isLeaf()
      root $\gets$ node.getChild(0)
   else if node $\neq$ root
      root $\gets$ node
   root.becomeRoot()
   repairLeftSpineInfo(root, i $=$ 0)
   top $\gets$ root
   break\end{lstlisting}\end{tabular}\end{center}
  \vspace*{1mm}
  \centerline{\textbf{(a) Pseudocode}}
  \vspace*{3mm}
  \centerline{\includegraphics[scale=1.333]{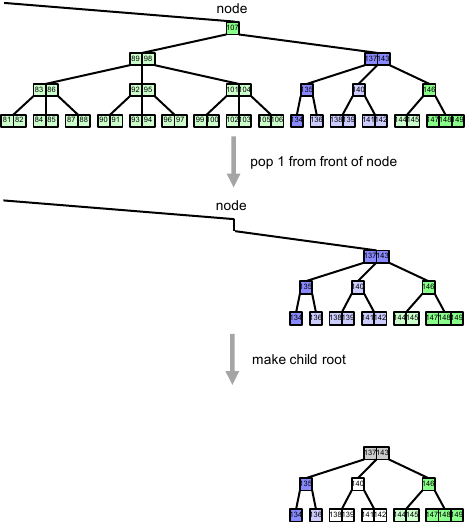}}
  \vspace*{1mm}
  \centerline{\textbf{(b) Concrete example}}
  \caption{\label{fig:make_child_root_example}Pseudocode and concrete example
    illustrating making the child root,
    c.f.\ Figure~\ref{fig:make_child_root_general}.
    \rm
    The example shows a \pseudocode{bulkEvict(107)}, i.e., evicting
    everything up to $t\le107$.
    Specifically, it shows the fourth iteration of the eviction loop,
    which tackles the nodes three layers above the leaves.
    The local eviction pops the first child and entry from
    \pseudocode{node}, leaving a degenerate node remnant with no
    entries and a single child.
    Being on the right spine, \pseudocode{node} has no neighbor that
    is even further right for doing a move or a merge operation.
    That means that the tree must shrink from the top, as everything
    above the node has lower timestamps.
    Since the node has zero entries and a single child, it cannot
    itself become the new root.
    Instead, the single child becomes the root.}
\end{figure*}

}
%\clearpage

\end{document}
\endinput